\tikzset{snake it/.style={decorate, decoration=snake}}
\tikzset{small snake/.style={decorate, decoration={snake, segment length=2mm, amplitude=1mm}}}
\let\originalleft\left
\let\originalright\right
\renewcommand{\left}{\mathopen{}\mathclose\bgroup\originalleft}
\renewcommand{\right}{\aftergroup\egroup\originalright}
\newcommand{\eps}{\varepsilon}
\newcommand{\cnot}{\mathrm{CNOT}}
\newcommand{\X}{\mathrm{X}}
\newcommand{\Y}{\mathrm{Y}}
\newcommand{\Z}{\mathrm{Z}}
\renewcommand{\H}{\mathrm{H}}
\renewcommand{\P}{\mathrm{P}}
\newcommand{\T}{\mathrm{T}}
\newcommand{\I}{\mathrm{I}}
\newcommand{\C}{\mathcal{C}}
\newcommand{\INQC}{\mathrm{INQC}} 
\newcommand{\gh}{\mathit{GH}}                
\newcommand{\GIP}{G_{\mathrm{IP}}}
\newcommand{\etaerr}{\eta_{\mathrm{err}}}
\newcommand{\etaloss}{\eta_{\mathrm{loss}}}
\DeclarePairedDelimiter{\norm}{\lVert}{\rVert}
\newcommand{\ket}[1]{{\left\vert{#1}\right\rangle}}
\title{Instantaneous non-local computation of low T-depth quantum circuits}
\author{Florian Speelman}
\institute{CWI, Amsterdam}
\begin{document}

\maketitle

\begin{abstract}
Instantaneous non-local quantum computation requires multiple parties to jointly perform a quantum operation,
using pre-shared entanglement and a single round of simultaneous communication. We study this task
for its close connection to position-based quantum cryptography, but it also has natural applications in the context of foundations of quantum physics and in distributed computing.
The best known general construction for instantaneous non-local quantum computation requires a pre-shared state which is exponentially large in the number of qubits involved in the operation,
while efficient constructions are known for very specific cases only.

We partially close this gap by presenting new schemes for efficient instantaneous non-local computation of several classes of quantum circuits, using the Clifford+T gate set.
Our main result is a protocol which uses entanglement exponential in the T-depth of a quantum circuit, able to perform non-local computation
of quantum circuits with a (poly-)logarithmic number of layers of T gates with quasi-polynomial entanglement.
Our proofs combine ideas from blind and delegated quantum computation with the garden-hose model, a combinatorial model of communication complexity which was recently introduced as a tool
for studying certain schemes for quantum position verification.
As an application of our results,
we also present an efficient attack on a recently-proposed scheme for position verification by Chakraborty and Leverrier.
\end{abstract}

\section{Introduction}
We study the
task of instantaneous non-local quantum computation, and
present new protocols to efficiently perform this task for specific classes of quantum circuits. Our main motivation
comes from position-based quantum cryptography, where previous attacks on schemes for position-based quantum
cryptography have taken either of two forms:

First results on quantum position-based cryptography
involved attacks on specific proposals for schemes, such
as the attacks by Lau and Lo~\cite{LL11}, those by Kent, Munro and Spiller~\cite{KMS11}, and the attack on Beigi and K\"onig's scheme using mutually-unbiased-bases~\cite{Speelman11}.  A certain family of efficient attacks on a concrete class of single-qubit schemes~\cite{BFSS13}
was formalized by the garden-hose model. 
Described as `fast protocols for bipartite unitary operators',
Yu, Griffiths and Cohen~\cite{YGC12,Yu11} give protocols that, although not directly inspired by position-based quantum cryptography,
can be translated to our setting.

On the other hand
Buhrman et al.~\cite{Buhrman2011} constructed a general attack which treats
the quantum functionality of the protocol to be attacked as a black box. For a protocol which uses a message of $n$ qubits, 
the entanglement consumption of
this attack is around $2^{ \log{(\frac{1}{\eps})} 2^{4n} }$ EPR pairs\index{EPR pair}, doubly exponential in $n$. Here $\eps$ represents the probability 
that the attack does not succeed. The construction
of Buhrman et al.~was based on a protocol for `instantaneous non-local measurement' by Vaidman~\cite{Vaidman03,CCJP10}.
Beigi and K\"onig~\cite{Beigi2011} later
constructed a more efficient general attack, using port-based teleportation\index{teleportation!port-based} -- a new teleportation method introduced by Ishizaka and Hiroshima~\cite{IH08,IH09}.
The improved attack uses $O(n \frac{2^{8 n}}{\eps^2})$ EPR pairs, still an exponential dependence on $n$. 

These protocols were able to solve the following 
task. 
Given a constant $\eps \geq 0$ and an $n$-qubit quantum operation\footnote{Our constructions only consider unitaries given by quantum circuits, but the task naturally extends to more general quantum operations.
The motivation for Vaidman's original scheme~\cite{Vaidman03}, which formed the basis of Buhrman et al.'s construction, was to instantaneously perform a non-local measurement. Our
constructions can also be applied to that case, by writing the measurement as a unitary operation followed by a measurement in the computational basis.} $U$, where $n$ is a natural number.
Two players, Alice and Bob, receive an arbitrary input state $\rho_{AB}$ of $n$ qubits, with the players receiving $n/2$ qubits each.
After a single round of simultaneous quantum\footnote{Since restriction to classical communication is not necessarily dictated by the application in
position-based quantum cryptography, we allow quantum communication. All presented protocols work equally well when all messages are classical instead.} communication, the players must output
a state $\eps$-close to $U \rho_{AB} U^{\dagger}$. Alice
outputs the first $n/2$ qubits of the state and Bob outputs the other $n/2$ qubits.
We define $\INQC_{\eps}(U)$ as the smallest number of EPR pairs that the players have to share
at the start of a protocol which performs this task.
$\INQC(U)$ is used as a shorthand for $\INQC_0(U)$, a protocol which works with no error.
We present a more precise definition of $\INQC$ is presented in Appendix~\ref{sec:definqc}.

In this work we partially bridge the gap between
efficient specific constructions for instantaneous non-local computation
and expensive general ones, by constructing a protocol for non-local computation of a unitary transformation $U$
such that the entanglement use of the protocol depends on the quantum circuit which describes $U$.

In particular, writing quantum circuits over the Clifford+T gate set, we create a protocol using entanglement exponential in the \emph{T-count}.
We also present a protocol that uses an amount of entanglement which scales as the number of qubits $n$ raised to the power of the \emph{T-depth} of the circuit. 
Even though
this is a quickly-growing dependence, for circuits of constant T-depth this
amounts to a polynomial dependence on $n$, unlike any earlier construction. For circuits of polylogarithmic T-depth we obtain an amount of entanglement which is quasi-polynomial in $n$, i.e.\ a dependence of the form $2^{(\log n)^c}$ for some constant $c$.
Note that the depth and size of the quantum circuit can be much higher than its T-depth: we allow an arbitrary number of gates
from the Clifford group in addition to the limited number of T gates.
Our results imply new efficient attacks on any scheme for position-verification where the action of the honest party can be written as a low T-depth quantum circuit.

Linking blind quantum computation and instantaneous non-local quantum computation was first considered by Broadbent\footnote{These results were first available as privately-circulated notes in December 2011, and were made
available online in December 2015.}~\cite{Bro15a}, who considered a setting where the parties have access to non-local boxes -- correlations even stronger than those
allowed by quantum mechanics. 
The techniques we use are also based on delegated and blind quantum computation~\cite{Childs2005,AS06,DNS10,FBS+14,Bro15}
and results on computation via teleportation~\cite{GC99}, but we combine them with new ideas from the \emph{garden-hose model}~\cite{BFSS13,KP14}\index{garden-hose model} -- a recently-introduced combinatorial model for communication complexity with close links to a specific class of schemes for position verification.

We prove two main theorems, each improving on the entanglement consumption of the best-known previous constructions for non-local instantaneous quantum computation for specific circuits\footnote{From now on, whenever we write `quantum circuit', we will always
mean a quantum circuit that only uses the Clifford group generators,
together with T gates.}.
Additionally, we use our proof method to construct a new attack on a scheme for position verification which was recently proposed by Chakraborty and Leverrier~\cite{CL15}.

\textbf{Theorem~\ref{thm:tcount}.}
Any $n$-qubit Clifford+T quantum circuit $C$ which has at most $k$ $\T$ gates has a protocol for instantaneous non-local computation 
using $O(n 2^k)$ EPR pairs.

\textbf{Theorem~\ref{thm:tdepth}.}
Any $n$-qubit quantum circuit $C$ using the Clifford+T gate set which has $\T$-depth $d$, has a protocol for instantaneous non-local computation
using $O(\, (68n)^{d}\, )$ EPR pairs.
\\
\\
The main technical tool we use in the proof of our depth-dependent construction is the following lemma, which is able to remove a conditionally-applied gate from the Clifford group without any communication
-- at an entanglement cost which scales with the garden-hose complexity of the function which describes the condition.

\textbf{Lemma~\ref{lem:nonlocalphase}.}
Let $f:\{0,1\}^n \times \{0,1\}^n \to \{0,1\}$ be a function known to all parties,
and let $\gh(f)$ be the garden-hose complexity of the function $f$. 
Assume Alice has a single qubit with state $\P^{f(x,y)} \ket{\psi}$, for
binary strings $x,y\in \{0,1\}^n$, where Alice knows the string $x$ and Bob knows $y$.
The following two statements hold: 
\begin{enumerate}
\item There exists an instantaneous protocol without any communication which uses
$2 \gh(f)$ pre-shared EPR pairs after which a
chosen qubit of Alice is in the state
$\X^{g(\hat{x},\hat{y})} \Y^{h(\hat{x},\hat{y})} \ket{\psi}$. Here
$\hat{x}$ depends only on $x$ and the $2\gh(f)$ bits that describe the
measurement outcomes of Alice, and $\hat{y}$ depends on $y$ and
the measurement outcomes of Bob.
\item The garden-hose complexities of the functions $g$ and $h$ are at most linear in the garden-hose complexity of the function $f$. More precisely, $\gh(g) \leq 4 \gh(f) + 1$ and $\gh(h) \leq 11 \gh(f) + 2$.
\end{enumerate}

\noindent \\ Chakraborty and Leverrier~\cite{CL15} recently proposed a protocol for quantum position verification on the interleaved multiplication of unitaries.
They show that all known attacks, applied to this protocol, require entanglement exponential in the number of terms $t$ in the product.
As an application of Lemma~\ref{lem:nonlocalphase}, we present an attack on their proposed protocol which has entanglement cost polynomial in $t$ and the number of qubits $n$.
The new attack requires an amount of entanglement which scales as $(\frac{t}{\eps})^{O(1)}$ per qubit, and for each qubit succeeds with probability at least $1-\eps$.

\section{Preliminaries}
\subsection{The Pauli matrices and the Clifford group}\index{Pauli matrix}
\index{Clifford group}
The single-qubit \emph{Pauli matrices} are $\X=\begin{pmatrix}0 & 1 \\1 & 0\end{pmatrix}$, $\Y=\begin{pmatrix}0 & -i \\i & 0\end{pmatrix}$,
$\Z=\begin{pmatrix}1 & 0 \\0 & -1\end{pmatrix}$, and the identity
$\mathrm{I}=\begin{pmatrix}1 & 0 \\0 & 1\end{pmatrix}$.
A \emph{Pauli operator} on an $n$-qubit state is the tensor product of $n$ one-qubit Pauli matrices, the group of $n$ qubit Pauli operators\footnote{The given definition includes a global phase, which is not important when viewing the elements as quantum gates.} is
$\mathcal{P} = \{\sigma_1 \otimes \dots \otimes \sigma_n \mid \forall j: \sigma_j \in \{I,X,Y,Z\} \} \times \{\pm 1, \pm i\}$. These are some of the simplest quantum operations and appear, for example, as corrections for standard quantum teleportation.

The \emph{Clifford group} can be defined as those operations that take elements of the Pauli
group to other elements of the Pauli group under conjugation -- the \emph{normalizer} of the Pauli group. We consider
the Clifford group on $n$ qubits, for some natural number $n$.
\begin{equation}\label{eq:defclifford}
\mathcal{C} = \{U \in \mathcal{U}(2^n) \mid \forall \sigma : \sigma \in \mathcal{P} \implies U \sigma U^\dagger \in \mathcal{P} \}
\end{equation}
Notable elements of the Clifford group are the single-qubit gates given by the Hadamard matrix 
$\H=\frac{1}{\sqrt{2}}\begin{pmatrix}1 & 1 \\1 & -1\end{pmatrix}$
and the phase gate 
$
\P=\begin{pmatrix}1 & 0 \\0 & i\end{pmatrix} $,
 and
the two-qubit CNOT gate given by 
$\cnot=\begin{pmatrix}1 & 0 & 0 & 0 \\0 & 1 & 0 & 0 \\ 0 & 0 & 0 & 1 \\ 0 & 0 & 1 & 0\end{pmatrix}$.

The set $\{\H,\P,\cnot\}$
generates the Clifford group up to a global phase when applied to 
arbitrary qubits, see e.g.~\cite{Got98}.
For all these gates, we will use subscripts to indicate the qubits or wires to which they are applied; e.g.~$\H_j$ is a Hadamard gate applied to the $j$-th
wire, and $\cnot_{j,k}$ is a $\cnot$ that has wire $j$ as control and $k$ as target.

Even though there exist interesting quantum circuits that use only gates
from the Clifford group, it is not a universal set of gates. Indeed, the Gottesman--Knill states that such a circuit can be efficiently simulated by a classical computer, something which is not known to be true for general quantum circuits~\cite{Gottesman98,AG04}. By extending $\mathcal{C}$ with \emph{any} gate, we do obtain a gate-set which is universal for quantum computation~\cite{NRS01}. 

The gate we will use to extend the Clifford gates to a universal set
is the T gate, sometimes called $\pi/8$-gate or $\mathrm{R}$, defined by $\T=\begin{pmatrix}1 & 0 \\0 & e^{i \pi / 4}\end{pmatrix}$.
We will write all circuits using gates from the set
$\{\X, \Z, \H, \P, \cnot, \T\}$.
Technically $\X$, $\P$, and $\Z$ are redundant here,
since they can be formed by the others as $\P=\T^2$, $\Z=\P^2$ and $\X=\Z\H\Z$, but 
we include them for convenience.

In our protocols for
instantaneous non-local computation, we will alternate teleportation steps with gate operations, and therefore the interaction between the Pauli matrices and the other gates are especially important. We will make much use of the following identities,
which can all be easily checked\footnote{Here equality is up to a global phase -- which we will ignore from now on for simplicity.}.
\begin{equation}\label{eq:pauliclifford}
\begin{aligned}[c]
\X\Z &= \Z\X \\
\P\Z &= \Z\P \\
\P\X &= \X\Z\P \\
\end{aligned}
\quad\quad
\begin{aligned}[c]
\H\X &= \Z\H \\
\H\Z &= \X\H\\
\T\X &= \P\X\T \\
\end{aligned}
\quad\qquad
\begin{aligned}[c]
\cnot_{1,2} (\X\otimes\I) &= (\X\otimes\X) \cnot_{1,2}  \\
\cnot_{1,2} (\I\otimes\X) &= (\I\otimes\X) \cnot_{1,2}  \\
\cnot_{1,2} (\Z\otimes\I) &= (\Z\otimes\I) \cnot_{1,2}  \\
\cnot_{1,2} (\I\otimes\Z) &= (\Z\otimes\Z) \cnot_{1,2}  \\
\end{aligned}
\end{equation}

\subsection{Key transformations from Clifford circuits}\label{sec:paulitransform}
For a $0/1$ vector $v$ of length $n$
and for any single-qubit operation $U$, we write $U^{v} = \bigotimes^{n}_{j=1} U^{v_j}$, i.e., $U^{v}$ is the application of $U$ on all qubits $j \in [n]$ for which $v_j=1$.
When Alice teleports a state $\ket{\psi}$ of $n$ qubits to Bob, the uncorrected state at Bob's side can be written as $\X^{a_x} \Z^{a_z} \ket{\psi}$. Here
we let $a_x$ and $a_z$ be the vectors representing the outcomes of the Bell measurements\index{Bell measurement} of Alice.
In analogy with the the literature on assisted and blind quantum computation, we
will call the teleportation measurement outcomes $a_x$ and $a_z$ the \emph{key} needed to decode $\ket{\psi}$.

The specific entries of these keys will often depend on several different measurement outcomes, given by earlier steps in the
protocol, and we will therefore occasionally describe them as \emph{polynomials} over $\mathbb{F}_2$.
Viewing the keys as polynomials is especially helpful
in the description of the more-complicated protocol of Section~\ref{sec:tdepth}.

For any gate from the Clifford group $U \in \mathcal{C}$, if we apply $U$ on the
encoded state, we can describe the resulting state as $U\ket{\psi}$ with a new key.
That is, $U X^{a_x} Z^{a_z} \ket{\psi} = X^{\hat{a}_x} Z^{\hat{a}_z} U \ket{\psi}$ for
some new 0/1 keys $\hat{a}_x, \hat{a}_z$.
The transformations of the keys will have a particularly simple form. (See for example~\cite{BCL+06} for a characterization of these transformations and a different application of Clifford circuit computation.)

For example, we can write the identities of Equation~\ref{eq:pauliclifford}
in terms of key transformations. The transformations that occur when a bigger Pauli operator is
applied, can then be easily found by writing the Pauli operator in terms of its generators $\{\H,\P,\cnot\}$,
and applying these rules one-by-one.
We will write $(x_1, x_2 \mid z_1, z_2)$ as a shorthand for, respectively, the X key
on the first and second qubit, and the Z key on the first and second qubit -- this is a convenient notation\footnote{This
mapping is called the symplectic notation when used in the stabilizer formalism, although we won't need to introduce the associated symplectic inner product for our construction.}
for the pair of vectors $a_x$ and $a_z$ that represent these keys.
All addition of these keys will be 
over $\mathbb{F}_2$, i.e., the $+$ represents the binary exclusive or.
\begin{align*}
\P (x \mid z) \to&\, (x \mid x + z) \P  \\
\H (x \mid z) \to&\, (z \mid x) \H  \\
\cnot_{1,2}  (x_1, x_2 \mid z_1, z_2) \to&\, (x_1, x_1 + x_2 \mid z_1 + z_2, z_2) \cnot_{1,2} 
\end{align*}

\subsection{Clifford+T quantum circuits, T-count and T-depth}
In several different areas of quantum information, gates
from the Clifford group are `well-behaved' or `easy', while the other non-Clifford
gates are hard -- an observation which was also made, with several examples, in the recent~\cite{BJ15}. 

The \emph{T-count}\index{T-count|textbf} of a quantum circuit is defined as the number of $\T$ gates in the entire
quantum circuit. The \emph{T-depth}\index{T-depth|textbf} is the number of layers of $\T$ gates, when viewing the circuit
as alternating between Clifford gates and a layer of simultaneous T gates.
See for example Figure~\ref{fig:tdepth}.

Given a quantum operation, it is not always obvious what is the best circuit in terms of T-count or T-depth. Recent work gave algorithms for finding circuits that are optimized in terms of T-depth~\cite{AMMR13,GS13,Sel13,AMM14} and optimal constructions for
arbitrary single-qubit unitaries have also been found~\cite{KMM13,RS14,Sel15}. These constructions 
sometimes increase the number of qubits involved by adding ancillas---the use of which can
greatly decrease the T-depth of the resulting circuit.

\subsection{The garden-hose model}\label{sec:nonlocalip}\index{garden-hose model}
The garden-hose model is a combinatorial model of communication complexity,
first introduced by Buhrman, Fehr, Schaffner and Speelman~\cite{BFSS13}. The recent work by Klauck and Podder~\cite{KP14} further investigated the notion, proving several follow-up results.
Here we repeat the basic definitions of the garden-hose model and its link to attacks on schemes for position-based quantum cryptography.

Alice has an input $x \in \{0,1\}^n$, Bob has an input $y \in \{0,1\}^n$, and the players
want to compute a function $f : \{0,1\}^n \times \{0,1\}^n \to \{0,1\}$ in the following way. Between the two players are $s$ pipes, 
and, in a manner depending on their respective inputs, the players link up these pipes one-to-one with hoses. Alice also has 
a water tap, which she can connect to one of these pipes. When $f(x,y)=0$, the water should exit on Alice's side, and when $f(x,y)=1$
we want the water to exit at Bob's side. The garden-hose complexity of a function $f$, written $\gh(f)$, then is the 
least number $s$ of pre-shared pipes the players need to compute the function in this manner.

There is a natural translation from strategies of the garden-hose game to a quantum protocol that routes a qubit to either Alice
or Bob depending on their local inputs, up to teleportation corrections. 
Consider the following quantum task, again dependent on a function $f$ like in the previous paragraph.
Alice now receives a quantum state $\ket{\psi}$ and a classical input $x$, Bob receives input $y$, and the players
are allowed one round of simultaneous communication. If $f(x,y)=0$, Alice must output $\ket{\psi}$ after this round of communication,
and otherwise Bob must output $\ket{\psi}$. We would like to analyze how much pre-shared entanglement the players need to perform this task.

From the garden-hose protocol for $f$, the players can come up with a strategy for this quantum task that needs at most $\gh(f)$ EPR pairs pre-shared.
Every pipe corresponds to an EPR pair. If a player's garden-hose strategy dictates a hose between some pipe $j$ and another pipe $k$,
then that player performs a Bell measurement of EPR-halves labeled $j$ and $k$. Alice's connection of the water tap to a pipe corresponds
to a Bell measurement between her input state $\ket{\psi}$ and the local half of an EPR pair.
After their measurements, the correct player will hold the state $\ket{\psi}$, up to Pauli corrections incurred by the teleportations.
The corrections can be performed after a step of simultaneous communication containing the outcomes of all measurements.

We will describe some of the logic in terms of the garden-hose model, as an 
abstraction away from the qubits involved.
When we refer to a quantum implementation of a garden-hose strategy, we always mean
the back-and-forth teleportation as described above.

The following lemma will prove to be useful. Let the number of \emph{spilling pipes} of a garden-hose protocol for a player
be the number of possible places the water could possibly exit. That is, the number of spilling pipes for Alice for a specific $x$, is the number of different places the water could exit on her side
over all Bob's inputs $y$. The number of spilling pipes for Alice is then the maximum number of spilling pipes over all $x$.
To be able to chain different parts of a garden-hose protocol together, it can be very convenient to only have a single spilling pipe for each player.

\begin{lemma}[Lemma 11 of~\cite{KP14}]\label{lem:ghsingleoutput}
A garden-hose protocol $P$ for any function $f$ with multiple spilling pipes can be converted to another garden-hose protocol $P'$ for $f$ that has only
one spilling pipe on Alice's side and one spilling pipe on Bob's side. The size of $P'$ is at most 3 times the size of $P$ plus 1.
\end{lemma}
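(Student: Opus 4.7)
The plan is to take the original protocol $P$ and augment it with a small number of auxiliary pipes plus extra hoses that collect the water from any of its possible exit points and funnel it to a single designated pipe on each side. First I would record that each player's matching, and therefore the set of pipe-ends unmatched on their own side, depends only on their own input: Alice knows her spilling pipes $U_A(x)$ from $x$ alone, and Bob knows $U_B(y)$ from $y$ alone. Neither needs to know where the water would actually exit (which depends on both inputs), only to enforce that every locally-unmatched end eventually funnels into a fixed end $o_A$ on Alice's side (or $o_B$ on Bob's side).

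Next I would introduce two auxiliary ``shadow'' pipes $p_i^{(1)}$ and $p_i^{(2)}$ for each of the $s$ original pipes $p_i$, plus one extra pipe that will be the common exit on one of the two sides, giving at most $3s+1$ pipes in total. The hose strategy of $P'$ keeps all original hoses of $P$ intact, so water initially flows exactly as in $P$. I would then add, depending on the local input, redirection hoses of the form $p_i - p_i^{(1)}$ whenever $p_i$ is locally unmatched in $P$, together with hoses on the shadow pipes that build a single chain ending at the designated exit. The consistency trick is to let each player use the shadow layer to ``absorb'' only the pipes that are unmatched from their own side: when water reaches a pipe $p_i$ that was to spill on, say, Alice's side, it is on Alice's side that the redirecting hose $p_i - p_i^{(1)}$ is active, after which the chain of shadow hoses drives it through the shadow layer to $o_A$. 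Because the original protocol never has water arrive on the unmatched side of the wrong player, the shadow layer on the other side just forms a consistent matching of shadow ends that is never actually traversed.

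The main obstacle will be arranging the hoses on the two shadow layers so that (a) each player's hoses depend only on their own input, (b) the chains funnel water from an arbitrary unmatched pipe on one side to the fixed exit pipe on that same side regardless of which spilling pipe the original $P$ would have used, and (c) the shadow-layer hoses on the opposite side never ``steal'' water or leak it to the wrong side. The cleanest way to handle this is to treat the two shadow layers asymmetrically: one shadow layer is dedicated to Alice's unmatched ends and one to Bob's, with the player owning a shadow layer on input $x$ (resp.\ $y$) laying down a sequential chain across all pipes of $U_A(x)$ (resp.\ $U_B(y)$) terminating at $o_A$ (resp.\ $o_B$), while the opposite player lays down an arbitrary fixed perfect matching on that same shadow layer so no ends are left open on their side.

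Finally I would verify correctness by tracing the water: the first phase of the flow is identical to $P$ and so arrives at some unmatched end of $P$ on the side dictated by $f(x,y)$; the redirection hose on that side enters the corresponding shadow layer; and the chain laid down by that same player routes the water along the shadow layer to the unique exit pipe $o_A$ or $o_B$. Counting gives $|P'| \leq s + 2s + 1 = 3s+1$, matching the claimed bound, and since the water's final exit side is unchanged from $P$, the protocol $P'$ still computes $f$.
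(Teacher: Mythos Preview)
Your plan has a genuine gap in the funnelling step. Consider Alice's shadow layer: you connect $p_i$--$p_i^{(1)}$ for each $i\in U_A(x)$ and let Bob place an \emph{arbitrary} fixed perfect matching on his side of layer~1. Now suppose $|U_A(x)|\ge 2$, with $i,j\in U_A(x)$, and Bob's fixed matching happens to pair $p_i^{(1)}$ with $p_j^{(1)}$. Water spilling at $p_i$ then travels $p_i \to p_i^{(1)} \to p_j^{(1)} \to p_j$, so you have re-injected it into an original pipe of $P$. On Bob's side $p_j$ may well be matched in his original strategy (membership in $U_A(x)$ says nothing about $U_B(y)$), so the water re-enters $P$ at an unrelated point; and if instead $j\in U_B(y)$, it jumps onto Bob's shadow layer and exits at $o_B$ even though $f(x,y)=0$. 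The sentence ``the shadow layer on the other side just forms a consistent matching of shadow ends that is never actually traversed'' is exactly where the argument fails: Bob's side of layer~1 \emph{is} traversed, and his matching can route water between two of Alice's redirection hoses.

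The device that actually works---and that the paper (following Klauck--Podder) invokes---is to make each shadow layer a genuine \emph{copy} of $P$: both players repeat their original hose pattern on the extra pipes, leaving the tap position open. Water leaving $P$ at any spilling pipe then retraces its path through the copy in reverse and exits at the unique tap of that copy. One copy collects Alice's spills, another collects Bob's, and one extra pipe carries the tap of Bob's copy (which sits on Alice's side) across to Bob, giving $3s+1$ pipes. The reversibility guaranteed by using identical strategies is precisely what an arbitrary fixed matching cannot provide; your shadow layers would become correct if you replaced ``arbitrary fixed perfect matching'' by ``the same connections as in $P$''.
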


Klauck and Podder also showed that computing the binary XOR of several protocols is possible with only a linear overhead in total garden-hose complexity~\cite[Theorem 18]{KP14}. 
We give an explicit construction for this statement in Appendix\ref{sec:proofsumgh} --
the result already follows from the similar construction of~\cite[Lemma 12]{KP14}, except that we 
obtain a constant which is slightly better than unfolding their (more general) proof.
\begin{lemma}\label{lem:sumgh}
Let $(f_1, f_2, \dots, f_k)$ be functions, where each function $f_i$ has garden-hose complexity $\gh(f_i)$.
Let $c \in \{0,1\}$ be an arbitrary bit. Then, 
\[
\gh \left( c \oplus \bigoplus _{i=1} ^{k} f_i \right) \leq 4 \sum_{i=1}^k \gh(f_i) + 1 \, .
\]
\end{lemma}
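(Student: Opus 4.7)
The plan is to construct the desired garden-hose protocol iteratively, incorporating one function $f_i$ at a time, and maintaining the invariant that the side on which the water currently sits (Alice's or Bob's) encodes the partial XOR $c \oplus f_1 \oplus \cdots \oplus f_j$ computed so far. The constant bit $c$ is handled first by an auxiliary pipe: if $c = 1$, I prepend a single pipe whose two ends serve to relocate the water from Alice's side to Bob's side immediately after it leaves the tap, accounting for the $+1$ in the bound. From this point on we may assume $c = 0$, so that water starts at the tap on Alice's side.

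Next, for each $f_i$ with garden-hose protocol $P_i$, I build a \emph{toggle gadget} $T_i$ equipped with two input ports (one per side) and two output ports (one per side), with the property that water entering on side $s \in \{A,B\}$ exits on side $s$ if $f_i = 0$ and on the opposite side if $f_i = 1$. Chaining $T_1, \ldots, T_k$ by identifying the output ports of $T_i$ with the input ports of $T_{i+1}$ via a single Alice-side hose and a single Bob-side hose yields a combined protocol whose water exits on the side equal to $\bigoplus_i f_i$. Each $T_i$ is built from two independent copies of $P_i$, contributing $2 \gh(f_i)$ pipes. In the first copy, Alice's hose connects the Alice-side input port to the tap-position of the copy; the two spilling pipes then serve as the two output ports, routed by $f_i$ exactly as in $P_i$. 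The second copy handles water entering at the Bob-side input port: by rewiring Alice's hoses so that what was the tap-position of the second copy is instead linked to the Alice-side output port, and by designating a suitable Bob-side pipe-end as the entry point, one exploits the reversibility of path traversal in the pipe-and-hose graph to route Bob-entering water to either the Alice-side or Bob-side output according to $f_i$.

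The total pipe count is $2 \gh(f_i) + O(1)$ per gadget, hence $2\sum_i \gh(f_i) + O(k)$ overall. Since $\gh(f_i) \geq 1$ for every $i$, the additive $O(k)$ term can be absorbed into a larger multiple of $\sum_i \gh(f_i)$, and with some care this yields the claimed bound $4 \sum_i \gh(f_i) + 1$. The main obstacle is verifying the correctness of the second copy of $P_i$ inside each $T_i$: because standard garden-hose protocols have their tap on Alice's side only, there is no direct way to ``start'' water from Bob's side, so the rewiring described above must be verified via a case analysis on the four combinations of entry side and value of $f_i$, using the fact that the pipe-and-hose graph decomposes into disjoint paths along which water traverses reversibly.
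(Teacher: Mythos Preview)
Your high-level strategy --- chain together per-function ``toggle gadgets'' with two input ports and two output ports, so that the side of the current port tracks the running XOR --- is exactly the paper's approach. The gap is in the construction of the gadget itself.

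You write that in the first copy of $P_i$, ``the two spilling pipes then serve as the two output ports.'' But a generic garden-hose protocol for $f_i$ does \emph{not} have a single spilling pipe per side: there may be many open pipe-ends on Alice's side (any of which could receive the water when $f_i=0$) and many on Bob's side. With multiple possible exit locations you cannot chain gadgets, because the next gadget needs a single designated entry point on each side. Your description of the second copy (``rewiring Alice's hoses \ldots\ designating a suitable Bob-side pipe-end as the entry point'') has the same problem and is left to an unperformed case analysis. The arithmetic also signals trouble: you claim $2\gh(f_i)+O(1)$ pipes per gadget and then, without explanation, inflate the total to $4\sum_i\gh(f_i)+1$; if two copies really sufficed you would be proving a strictly stronger bound.

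The paper resolves this with \emph{four} copies of $P_i$ per gadget, labelled $0$-IN, $1$-IN, $0$-OUT, $1$-OUT, all wired identically. Water entering at the $0$-input is fed to the tap of $0$-IN; the open $0$-pipes of $0$-IN are connected one-to-one with the open $0$-pipes of $0$-OUT, and the open $1$-pipes of $0$-IN with the open $1$-pipes of $1$-OUT. Because the OUT copy is wired identically, the water retraces its path in reverse and exits at the (unique) tap position of the relevant OUT copy. The $1$-input is handled symmetrically via $1$-IN. This is the same reversibility trick that underlies Lemma~\ref{lem:ghsingleoutput}, and it is precisely what produces the factor~$4$ in the bound.
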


\section{Low T-count quantum circuits}\label{sec:tcount}
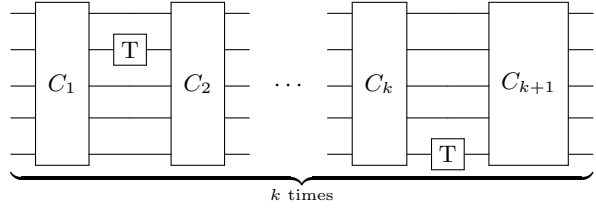
\begin{figure}[thb!]
\begin{center}
$
\underbrace{
\Qcircuit @C=1em @R=.4em {
& \multigate{4}{C_1} & \qw & \multigate{4}{C_2} & \qw & & & \multigate{4}{C_{k}} & \qw & \multigate{4}{C_{k+1}} & \qw \\
& \ghost{C_1} 		 & \gate{\T} & \ghost{C_2} 		  & \qw & & & \ghost{C_{k}} & \qw & \ghost{C_{k+1}} & \qw\\
& \ghost{C_1} 		 & \qw & \ghost{C_2} 		  & \qw & \push{\cdots} & & \ghost{C_{k}} & \qw & \ghost{C_{k+1}} & \qw\\
& \ghost{C_1} 		 & \qw & \ghost{C_2} 		  & \qw & & & \ghost{C_{k}} & \qw & \ghost{C_{k+1}} & \qw\\
& \ghost{C_1} 		 & \qw & \ghost{C_2} 		  & \qw & & & \ghost{C_{k}} & \gate{\T} & \ghost{C_{k+1}} & \qw\\
}
}_{\text{$k$ times}}
$
\end{center}
\caption{A circuit with T-count $k$. The $C_i$ gates represent subcircuits consisting only of operation from the Clifford group $\C$.}
\label{fig:tcount}
\end{figure}

\begin{theorem}\label{thm:tcount}
Let $C$ be an $n$-qubit quantum circuit with gates from the Clifford+T gate set, and let $C$ contain $k$
$\T$-gates in total. Then $\INQC(C) \leq O(n 2^k)$, i.e.,
there exists a protocol for two-party instantaneous non-local computation of $C$
which uses a pre-shared entangled state of $O(n 2^k)$ EPR pairs.
\end{theorem}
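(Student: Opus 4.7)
The plan is to write $C$ as an alternating sequence $C_{k+1} \T_{j_k} C_k \T_{j_{k-1}} \cdots \T_{j_1} C_1$ of Clifford subcircuits punctuated by individual T gates, and to combine a cheap protocol for Clifford circuits with a branching construction that costs a multiplicative factor of two for each T gate.

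First I would establish the Clifford base case: any $n$-qubit Clifford circuit admits an INQC protocol using $O(n)$ EPR pairs and a single round of simultaneous communication, via the key-tracking framework of Section~\ref{sec:paulitransform}. Both players simultaneously teleport their halves through $O(n)$ pre-shared EPR pairs, so that the global state is Pauli-encoded with keys that are linear functions of their local measurement outcomes; each Clifford gate $\H$, $\P$, $\cnot$ is then applied locally while the two players update the symbolic description of the keys according to the rules of Equation~\ref{eq:pauliclifford}; finally, in the single round of communication each party sends its measurement outcomes, both parties reconstruct the final keys, and each decodes its half of the output by a Pauli correction.

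Second, I would handle the T gates by the brute-force branching trick. The identity $\T \X^{a} = \P^{a} \X^{a} \T$ means that pushing a T past an $\X$-key produces an extra $\P^{a}$ factor whose bit $a$ is in general a linear combination of measurement outcomes held by both players, so neither player can apply $\P^{a}$ during a single round of communication. For each string $s \in \{0,1\}^k$ I would dedicate a separate branch register of $O(n)$ EPR pairs and, inside that branch, apply the \emph{fixed} Clifford $\P^{s_i}$ after the $i$-th T gate; this turns each branch into a deterministic Clifford-plus-known-corrections computation that the base case handles with $O(n)$ EPR pairs. The branch corresponding to the string $s=a^*$ of actual X-key bits at the T gates then produces $C\ket{\psi}$ under a Pauli encoding, while the remaining branches contain garbage correlated with the measurement outcomes; summing over the $2^k$ branches gives the claimed $O(n 2^k)$ EPR pairs.

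The main obstacle is that the single input state cannot be duplicated across the $2^k$ branches and that the correct branch $a^*$ is identified only after the round of communication has happened. I would resolve this by threading the input through the branching tree coherently: at every T gate the two candidate corrections $\P^{s_i}$ are folded into the computation by a controlled-$\P$ gadget driven by a fresh EPR pair, so that the choice of branch is entangled rather than committed, and every descendant of the tree carries the input. In the single round of communication the players exchange all their measurement outcomes, both reconstruct $a^*$ and the final Pauli keys, and local operations collapse the branching tree onto the correct branch and decode the encoded result into Alice's and Bob's output registers. The fiddly part I expect is checking that the Pauli keys propagated through every branch, together with the measurement bits produced by the branching gadgets, still combine into linear functions of all measurement outcomes, so that the final decoding step reduces to a plain Pauli correction and nothing beyond the $O(n 2^k)$ EPR pairs is consumed.
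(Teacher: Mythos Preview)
Your overall structure—one Clifford base protocol of cost $O(n)$, doubled at each of the $k$ T gates—matches the paper. The gap is in the mechanism for the doubling. Your proposed ``coherent branching'' via a controlled-$\P$ from a fresh EPR half does not work as stated: a controlled-$\P$ from an EPR qubit applies $\P^r$ for a \emph{random} bit $r$, not for the specific X-key bit $a$ that the T gate produced, and after the round of communication you cannot post-select the branch with $r=a$; measuring the control in any fixed basis collapses to a random branch, not the correct one. ``Every descendant of the tree carries the input'' is exactly what no-cloning forbids, and the entangled superposition you describe does not let you recover the correct descendant by local operations plus classical post-processing.

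The paper avoids this by arranging the protocol so that the P-correction bit is always known to a single party, who applies it physically. Concretely: Bob first teleports his half to Alice, so the entire Pauli key is Bob's data. Alice applies the next Clifford block and a T gate; the resulting $\P^{b}$ byproduct has $b$ computable by Bob alone. Alice then teleports the whole state to Bob, Bob applies $(\P^{b})^{\dagger}$ and his Pauli correction, and teleports back using one of \emph{two} labelled sets of $n$ EPR pairs, choosing the set indexed by $b$. Alice, not knowing $b$, applies her (different) correction to each set in parallel. This restores the invariant ``state at Alice, Pauli key known to Bob'' on the correct set, and is repeated on all live sets. The branching is therefore physical (Bob's choice of EPR set), not coherent; that is the step your sketch is missing. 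Your Clifford base case is essentially fine once you replace ``both players simultaneously teleport'' by ``Bob teleports to Alice, Alice executes the Clifford circuit, Alice teleports Bob's half back''.
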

\begin{proof}
Let Alice's input state be some arbitrary quantum state $\ket{\psi_0}$.
We will write
the quantum state at step $t \in \{0, \dots, k\}$, as intermediate result of executing the circuit $C$ for $t$ steps, as $\ket{\psi_t}$. 
Let $C_t$ be the subcircuit, consisting only of Clifford gates, between the $(t-1)$th and $t$th T gates.
At step $t$, the circuit alternates between the Clifford subcircuit $C_t$ and a T-gate on some wire $w_t$ which we write as $T_{w_t}$, that is, we define $
T_{w_t} = \I^{\otimes w_t-1} \otimes \T \otimes \I^{\otimes n-w_t-1}$.

Because of the nature of the setting, all steps are done instantaneously unless
noted otherwise, without waiting for a message of the other party.
For example, if the description mentions that one party teleports a qubit, we
can instantly describe the qubit as `being on the other side', but the other party will act on the uncorrected qubit, since the communication will only happen afterwards and simultaneously.

We first give a high-level description of the protocol. Bob teleports his part of the state to
Alice, who holds the entire state -- up to teleportation corrections.
Alice will now apply the first set of Clifford gates, followed by a single
$\T$ gate.
The teleportation corrections (all known to Bob)
determine whether the $\T$ gate that Alice performs creates an unwanted extra $\P$ gate on the state. The extra $\P$ gate is created whenever
an X correction is present, because of the relation $\T \X = \P \X \T$. Therefore,
even though Alice holds the state, only Bob knows whether the state
has an extra unwanted $\P$ gate or not.

To remove the unwanted gate, Alice teleports all $n$ qubits back to Bob, who
corrects the phase gate (if present). The players then perform a garden-hose-like trick to keep the form of the key simple, at the cost of doubling the total size at each step.

Now we will give the precise description of the players' actions:

\begin{description}
\item[Step 0] Bob performs a Bell measurement to teleport
all his $n/2$ qubits to Alice, where we write the needed X-corrections as $b^0_{x,i}$ and Z-corrections $b^0_{z,i}$, for $i=n/2+1,\dots,n$. Now, since the qubits Alice already started with don't
need a correction, we have $b^0_{x,i}=b^0_{z,i}=0$ for $i=1, \dots, n/2$. Then we write $b^0_x$ and $b^0_z$ for the 0/1 vector containing the X corrections and Z correction respectively.
The complete state is $\X^{b^0_x}\Z^{b^0_z} \ket{\psi_0}$, where all qubits are at Alice's side while Bob knows the key.

\item[Step 1.a] Alice executes $C_1$ on the (uncorrected) qubits, so that the state now equals
\[
C_1 \X^{b^0_x}\Z^{b^0_z} \ket{\psi_0} = \X^{\hat{b}^1_x}\Z^{\hat{b}^1_z} C_1 \ket{\psi_0} \,,
\]
where $(\hat{b}^1_x, \hat{b}^1_z) = f_1(b^0_x, b^0_z)$, with $f_1: \mathbb{F}^{n}_2 \times \mathbb{F}^{n}_2 \to \mathbb{F}^{n}_2 \times \mathbb{F}^{n}_2$ is a formula that consists
of relabeling and addition over $\mathbb{F}_2$, and that is known to all parties. Bob knows all the entries of the vectors $\hat{b}^1_x$ and $\hat{b}^1_z$ that contain the new teleportation corrections.

\item[Step 1.b] Alice executes the $\T$ gate on the correct wire $w_1 \in \{1,\dots, n\}$ of
the uncorrected qubits. Define $\mathbf{b}^{1} = \hat{b}^1_{x,w_1}$, the $w_1$ entry of the vector $\hat{b}^1_x$. The state in Alice's possession is now
\begin{align*}
\T_{w_1} \X^{\hat{b}^1_x}\Z^{\hat{b}^1_{z}} C_1 \ket{\psi_0} 
&= \P^{\mathbf{b}^{1}}_{w_1} \X^{\hat{b}^1_x}\Z^{\hat{b}^1_{z}} \T_{w_1} C_1 \ket{\psi_0} 
= \P^{\mathbf{b}^{1}}_{w_1} \X^{\hat{b}^1_x}\Z^{\hat{b}^1_{z}} \ket{\psi_1}
\,.
\end{align*}
That is, besides the presence of the Pauli gates, depending on the teleportation measurements, the $w_1$ qubit possibly has an extra phase gate
that needs to be corrected before the protocol can continue.

\item[Step 1.c] Alice teleports all qubits to Bob, with teleportation outcomes
$a^1_{x}, a^1_{z} \in \mathbb{F}^{n}_2$. We will define the $\mathbf{a}^{1}$ as the $w_1$ entry of $a^{1}_{x}$. Bob then has the state
\[
\X^{a^1_{x}} \Z^{a^1_{z}} \P^{\mathbf{b}^{1}}_{w_1} \X^{\hat{b}^1_x}\Z^{\hat{b}^1_{z}} \ket{\psi_1}
= \P^{\mathbf{b}^{1}}_{w_1} \X^{\hat{b}^1_x}\Z^{\hat{b}^1_{z}} \Z^{\mathbf{a}^{1} \mathbf{b}^{1}} \X^{a^{1}_x} \Z^{a^{1}_z} \ket{\psi_1} \, .
\]
Knowing the relevant variables from his measurement outcomes in the previous steps, Bob performs the operation $  \X^{\hat{b}^1_x}\Z^{\hat{b}^1_{z}} (\P^{\mathbf{b}^{1}}_{w_1})^{\dagger}$ to transform the state to $
 \Z^{\mathbf{a}^{1} \mathbf{b}^{1}} \X^{a^{1}_x} \Z^{a^{1}_z} \ket{\psi_1}$.
\item[Step 1.d] For this step the players share two sets of $n$ EPR pairs, 
one set labeled ``$\mathbf{b}^{1}=0$'', the other set labeled ``$\mathbf{b}^{1}=1$''.
Bob teleports the state to Alice using the set corresponding to the
value of $\mathbf{b}^{1}$, with teleportation outcomes $b^{2}_x$ and $b^{2}_z$.

\item[Step 1.e] The set of qubits corresponding to the correct value of $\mathbf{b}^{1}$ are in the state
\[
\X^{b^{2}_x} \Z^{b^{2}_x} \Z^{\mathbf{a}^{1} \mathbf{b}^{1}} \X^{a^{1}_x} \Z^{a^{1}_z} \ket{\psi_1} \, .
\]
On the set labeled  ``$\mathbf{b}^{1}=0$'', Alice applies $\X^{a^{1}_x} \Z^{a^{1}_z}$,
and on the set labeled ``$\mathbf{b}^{1}=1$'' Alice applies $\X^{a^{1}_x} \Z^{a^{1}_z} \Z^{\mathbf{a}^{1}}_{w_1}$, so that the state (at the correct set of qubits) equals
$
\X^{b^{2}_{x}} \Z^{b^{2}_{z}} \ket{\psi_1}
$.

We are now in almost the same situation as before the first step: Alice is in possession of a state for which Bob completely knows the needed teleportation corrections -- with the difference that Alice does not know which of the two sets that is.

\item[Steps 2$\,\mathbf{ \dots k}$] The players repeat the protocol from Step 1, but Alice performs all steps in parallel for \emph{all} sets of states. The needed resources then double with each step: two sets for step 2, four
for step 3, etc.

\item[Step k+1, final step] When having executed this protocol for the entire circuit, Alice
only teleports Bob's qubits back to him, i.e.\ the qubits corresponding to the last $n/2$ wires, instead of the entire state, so that
in the correct groups, Alice and Bob are in possession of the state $\ket{\psi_k}$ up to simple teleportation corrections. Then, in their step of simultaneous communication, the players exchange all teleportation measurement outcomes.
After receiving these measurement outcomes, the players discard the qubits that did not contain the state, and perform the Pauli corrections on the correct qubits.
\end{description}

The needed EPR pairs for this protocol consist of $n/2$ for Step 0. Then
every set uses at most $3n$ pairs: $n$ for the teleportation of Alice to Bob, and $2n$ for the teleportation back.
The $t$-th step of the circuit starts with $2^{t-1}$ sets of parallel executions,
therefore the total entanglement is upper bounded by $n/2 + \sum^{k}_{t=1}{2^{t-1} 3n} \leq 3 n 2^k$.
\end{proof}

\section{Conditional application of phase gate using garden-hose protocols}\label{sec:nonlocalphase}
The following lemma connects the difficulty of removing an unwanted phase gate that is applied conditional on a function $f$, to the garden-hose complexity of  $f$.
This lemma is the main technical tool which we use to non-locally compute quantum circuits with a dependence on the $\T$-depth.
\begin{lemma}\label{lem:nonlocalphase}
Assume Alice has a single qubit with state $\P^{f(x,y)} \ket{\psi}$, for
binary strings $x,y\in \{0,1\}^n$, where Alice knows the string $x$ and Bob knows $y$.
Let $\gh(f)$ be the garden-hose complexity of the function $f$. The following two statements hold: 
 
\begin{enumerate}
\item There exists an instantaneous protocol without any communication which uses
$2 \gh(f)$ pre-shared EPR pairs after which a
known qubit of Alice is in the state
$\X^{g(\hat{x},\hat{y})} \Y^{h(\hat{x},\hat{y})} \ket{\psi}$. Here
$\hat{x}$ depends only on $x$ and the $2\gh(f)$ bits that describe the
measurement outcomes of Alice, and $\hat{y}$ depends on $y$ and
the measurement outcomes of Bob.
\item The garden-hose complexities of the functions $g$ and $h$ are at most linear in the complexity of the function $f$. More precisely, $\gh(g) \leq 4 \gh(f) + 1$ and $\gh(h) \leq 11 \gh(f) + 2$.
\end{enumerate}
\end{lemma}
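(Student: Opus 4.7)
The plan is to combine the quantum interpretation of the garden-hose protocol with the gate-teleportation identity: pre-applying a Clifford gate $U$ to one half of an EPR pair before that pair is used for teleportation makes the teleportation effectively apply $U$ to the teleported state, on top of the usual Pauli corrections. Taking $U = \P^\dagger$ and arranging the geometry so that this modified pair is used by the teleportation chain exactly when $f(x,y)=1$ will allow us to cancel the initial $\P^{f(x,y)}$ factor without any communication.

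Concretely, I would take two copies of a garden-hose protocol for $f$ (optionally preprocessed by Lemma~\ref{lem:ghsingleoutput} so that each side has a unique spilling pipe), using $\gh(f)$ EPR pairs per copy for a total of $2\gh(f)$. In the second copy I pre-apply $\P^\dagger$ to Bob's half of the EPR pair corresponding to Bob's spilling pipe. Alice attaches her input qubit $\P^{f(x,y)}\ket{\psi}$ as the water source of the first copy, and designates the pipe-end of the second copy that would play the role of Alice's source there as her chosen output qubit (leaving it unmeasured). Alice and Bob then simultaneously perform all of the matching Bell measurements of both copies, plus two ``bridging'' Bell measurements connecting the two Alice-side spilling pipes and the two Bob-side spilling pipes.

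A direct graph-theoretic inspection of the combined garden-hose graph shows that, regardless of the value of $f(x,y)$, the bridges splice the two copies into a single teleportation path from Alice's input qubit to her chosen output qubit. Moreover, this combined path traverses the modified EPR pair in the second copy exactly when $f(x,y)=1$, triggering the gate-teleportation trick and cancelling the initial $\P$ factor. The state at the chosen qubit thus becomes $\ket{\psi}$ up to a Pauli correction, and commuting the conditionally-applied $\P^\dagger$ rightward through the accumulated $\X,\Z$ corrections via $\P^\dagger \X = \Y \P^\dagger$ and $\P^\dagger \Z = \Z \P^\dagger$ converts some $\X$ factors into $\Y$ factors whenever $f(x,y)=1$, yielding the claimed form $\X^{g(\hat{x},\hat{y})} \Y^{h(\hat{x},\hat{y})}\ket{\psi}$.

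For the complexity bounds on $g$ and $h$, both functions decompose as short XOR-combinations of basic path-parities (total $\X$-parity, total $\Z$-parity, and a sub-$\X$-parity before the $\P^\dagger$ insertion), each computable by a garden-hose protocol of size at most $\gh(f)$ obtained from minor modifications of the original protocol that track specific correction bits along the path. After combining and simplifying, $g$ reduces essentially to a single nontrivial $\gh(f)$-complexity piece, so Lemma~\ref{lem:sumgh} immediately yields $\gh(g) \leq 4\gh(f)+1$. The function $h$ contains an additional contribution from the $f$-dependent location of the $\P^\dagger$ insertion — essentially $f$ ANDed with a sub-path $\X$-parity — which expands into an XOR of a few $\gh(f)$-complexity pieces including $f$ itself, and a second application of Lemma~\ref{lem:sumgh} then delivers $\gh(h) \leq 11\gh(f)+2$. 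The main obstacle I foresee is precisely this bookkeeping: the split between pre- and post-$\P^\dagger$ outcomes depends on $x$ and $y$ through the shape of the combined path, so the challenge is to find descriptions of $g$ and $h$ that are uniform in this dependence and whose individual XOR pieces are manifestly garden-hose-computable with the tight constants the lemma requires.
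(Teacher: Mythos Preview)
Your core idea for part~1 --- run two copies of the garden-hose protocol for $f$, glue them at the open ends, and arrange for a $\P^\dagger$ to be picked up on Bob's side exactly when $f(x,y)=1$ --- is the same as the paper's. But your specific realization does not hit the stated $2\gh(f)$ bound. If you invoke Lemma~\ref{lem:ghsingleoutput} to obtain a unique spilling pipe on each side, each copy already costs $3\gh(f)+1$ pipes, so two copies cost roughly $6\gh(f)$. If you skip the preprocessing, then there is no single ``Bob's spilling pipe'' to which you can pre-apply $\P^\dagger$, and your description of a single modified EPR pair and two bridging measurements no longer applies. The paper resolves this simply: it does \emph{not} reduce to unique outputs. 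Bob applies $\P^{-1}$ directly to \emph{every} unmeasured qubit on his side after the first copy, and then each open pipe in copy~1 is Bell-measured with the corresponding open pipe in copy~2. This works for arbitrary spilling patterns and uses exactly $2\gh(f)$ EPR pairs.

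For part~2, your plan to decompose $g$ and $h$ into XORs of ``$\gh(f)$-complexity pieces'' and then invoke Lemma~\ref{lem:sumgh} is not how the paper obtains the constants, and your complexity claims for the pieces are not justified. Even the total $\X$-parity along the path through a single copy is not obviously a $\gh(f)$-complexity function: which measurement outcomes lie on the path depends jointly on $x$ and $y$. The paper instead builds explicit ``path-tracking'' garden-hose protocols. For $g$: allocate two pipes $(I_i,X_i)$ per EPR pair $i$ of the quantum protocol (so $4\gh(f)$ pipes total), and connect pair $i$ to pair $j$ in parallel or crosswise according to the $\X$-outcome of the Bell measurement that linked $i$ and $j$; one extra pipe routes the answer, giving $4\gh(f)+1$. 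For $h$: first run a unique-output protocol for $f$ itself ($3\gh(f)$ pipes via Lemma~\ref{lem:ghsingleoutput}) to decide whether the phase gate was present; then branch into two copies of the same path-tracking construction, one using the $\Z$-outcomes and one using $\X\oplus\Z$-outcomes on the pre-correction segment, each costing $4\gh(f)+1$. The total $3\gh(f)+2(4\gh(f)+1)=11\gh(f)+2$ is where that constant comes from --- it is not a Lemma~\ref{lem:sumgh} bound.
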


\begin{proof}
To prove the first statement we will construct a quantum protocol
that uses $2 \gh(f)$ EPR pairs,
which is able to remove the conditional phase gate. The quantum protocol
uses the garden-hose protocol for $f$ as a black box.

For the second part of the statement of the lemma, we construct garden-hose
protocols which are able to compute the teleportation corrections
that were incurred by executing our quantum protocol. By
explicitly exhibiting these protocols, we give an upper bound
to the garden-hose complexity of the X correction $g$ and the
Z correction $h$.

\begin{figure}[h]
\begin{center}
 \begin{tikzpicture}
  \coordinate (psi) at (1, 10);
 \coordinate (tlbox1) at (1,9);
 \coordinate (brbox1) at (7,6);
 \coordinate (tlbox2) at (1,5);
 \coordinate (brbox2) at (7,2);
 
\coordinate (pipeA1) at (1,9);
\coordinate (pipeB1) at (7,9);
\coordinate (pipeA2) at (1,8);
\coordinate (pipeB2) at (7,8);
\coordinate (pipeA3) at (1,7);
\coordinate (pipeB3) at (7,7);

\coordinate (pipeA4) at (1,5);
\coordinate (pipeB4) at (7,5);
\coordinate (pipeA5) at (1,4);
\coordinate (pipeB5) at (7,4);
\coordinate (pipeA6) at (1,3);
\coordinate (pipeB6) at (7,3);

\coordinate (psiout) at (0.5,1.5);
 
\draw[fill=white] (pipeA1) circle (0.1cm);
\draw[fill=white] (pipeB1) circle (0.1cm);
\draw[thick, small snake] ($ (pipeA1) + (0.2,0) $) -- ($ (pipeB1) - (0.2,0) $);

\draw[fill=white] (pipeA2) circle (0.1cm);
\draw[fill=white] (pipeB2) circle (0.1cm);
\draw[thick, small snake] ($ (pipeA2) + (0.2,0) $) -- ($ (pipeB2) - (0.2,0) $);

\draw[fill=white] (pipeA3) circle (0.1cm);
\draw[fill=white] (pipeB3) circle (0.1cm);
\draw[thick, small snake] ($ (pipeA3) + (0.2,0) $) -- ($ (pipeB3) - (0.2,0) $);

\draw[fill=white] (pipeA4) circle (0.1cm);
\draw[fill=white] (pipeB4) circle (0.1cm);
\draw[thick, small snake] ($ (pipeA4) + (0.2,0) $) -- ($ (pipeB4) - (0.2,0) $);

\draw[fill=white] (pipeA5) circle (0.1cm);
\draw[fill=white] (pipeB5) circle (0.1cm);
\draw[thick, small snake] ($ (pipeA5) + (0.2,0) $) -- ($ (pipeB5) - (0.2,0) $);

\draw[fill=white] (pipeA6) circle (0.1cm);
\draw[fill=white] (pipeB6) circle (0.1cm);
\draw[thick, small snake] ($ (pipeA6) + (0.2,0) $) -- ($ (pipeB6) - (0.2,0) $);

\draw[fill=white, fill opacity=1] (psi) circle (0.1cm) node[anchor=west] {$\P^{f(x,y)} \ket{\psi}$};
\node at (4,8) [draw, thick, minimum width=5cm, minimum height=3cm, fill=white, fill opacity=1, align=center] {Teleport according to \\GH protocol for $f$};

\node at ($ (pipeB1) + (0.6, 0) $)  [draw, thick] {$\P^{-1}$};
\node at ($ (pipeB2) + (0.6, 0) $)  [draw, thick] {$\P^{-1}$};
\node at ($ (pipeB3) + (0.6, 0) $)  [draw, thick] {$\P^{-1}$};

\draw ($ (psi) - (0.3, 0) $) to[out=200, in=160] ($ (pipeA1) - (0.3, 0) $);
\draw ($ (pipeA2) - (0.3, 0) $) to[out=200, in=160] ($ (pipeA5) - (0.3, 0) $);
\draw ($ (pipeA3) - (0.3, 0) $) to[out=200, in=160] ($ (pipeA6) - (0.3, 0) $);

\draw ($ (pipeB1) + (1.1, 0) $) to[out=-30, in=20] ($ (pipeB4) + (0.3, 0) $);
\draw ($ (pipeB2) + (1.1, 0) $) to[out=-30, in=20] ($ (pipeB5) + (0.3, 0) $);
\draw ($ (pipeB3) + (1.1, 0) $) to[out=-30, in=20] ($ (pipeB6) + (0.3, 0) $);

\draw[dashed,<-] ($ (pipeA4) - (0.3, 0) $) to[out=200, in=100] ($ (psiout) + (0, 0.3) $);
\node[anchor=west] at (psiout)  {$\X^{g(\hat{x},\hat{y})} \Z^{h(\hat{x},\hat{y})} \ket{\psi}$};
 
 \draw[fill=white] (psi) circle (0.1cm) node[anchor=west] {$\P^{f(x,y)} \ket{\psi}$};
\node at (4,4) [draw, thick, minimum width=5cm, minimum height=3cm, fill=white, fill opacity=1, align=center] {GH protocol for $f$ (copy)};

 \end{tikzpicture}
 \end{center}
\caption{Schematic overview of the quantum protocol to undo the conditionally-present phase gate on $\ket{\psi}$. The solid connections correspond to Bell measurements.}
\label{fig:nonlocalphase}
\end{figure}
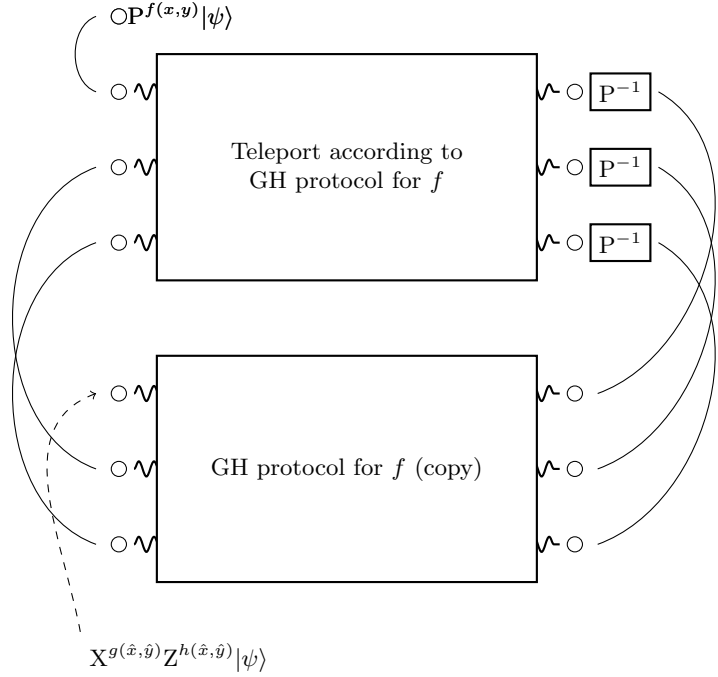

The quantum protocol is shown as Figure~\ref{fig:nonlocalphase}.
Alice and Bob execute the garden-hose protocol with the state $\P^{f(x,y)} \ket{\psi}$, i.e.\ they teleport
the state back and forth, with the EPR pairs chosen depending on $x$ and $y$.
Afterwards, if $f(x,y)=0$, the qubit will be at one of the unmeasured EPR halves on Alice's side, and if $f(x,y)=1$ the qubit will be on 
Bob's side. The state of the qubit will be $\X^{g'(x', y')} \, \Z^{h'(x', y')} \, \P^{f(x,y)} \ket{\psi} = \P^{f(x,y)} \X^{g'(x', y')} \, \Z^{h'(x', y') \oplus f(x,y) g'(x',y')} \,  \ket{\psi}$,
for some functions $g'$ and $h'$.

On each qubit on Bob's side, corresponding with an `open pipe' in
the garden-hose model, Bob applies $\P^{-1}$, so that the state of the qubit is now equal to $\X^{g'(x', y')} \, \Z^{h'(x', y') \oplus f(x,y) g'(x',y')} \, \ket{\psi}$. 
The exact location
of our qubit depends on the protocol, and is unknown to both players. Here $x'$ and $y'$ are the measurement outcomes of Alice and Bob in this first half of the protocol.

To return the qubit to a known position without an extra communication step, we employ a trick that uses the reversibility of the
garden-hose model. Alice and Bob repeat the exact same garden-hose strategy, except they leave the start open, and connect the open ends between
the original and the copy. Alice performs a Bell measurement between the first open qubit in the original, and the first open qubit in the copy, etc.
Bob does the same, after he applied the $\P$ gates. Afterwards, the qubit will be present in the start location, `water tap' in garden-hose terminology,
of the copied game, since it has followed the exact same path backwards. The final state of the qubit now is $\X^{g(\hat{x}, \hat{y})} \, \Z^{h(\hat{x}, \hat{y})} \,  \ket{\psi}$, for some functions
$g$ and $h$ and $\hat{x}$ and $\hat{y}$ the measurement outcomes of Alice and Bob respectively. The total entanglement consumption is $2 \gh(f)$.

Every measurement corresponds to a connection of two pipes in the
garden-hose model, therefore each player performs at most $\gh(f)$ teleportation
measurements, of which the outcomes can be described by $2 \gh(f)$ bits.

Label the EPR pairs with numbers from $\{1, 2, \dots, 2\gh(f)\}$, and use the label 0 for the register holding the starting qubit $\ket{\psi}$.
Let $\mathcal{A}$ be a list of disjoint pairs of the indices of the EPR pairs that Alice uses for teleportation in this protocol, and let $a_x, a_z \in \{0,1\}^{|\mathcal{A}|}$ be the bit strings that
respectively hold the X and Z outcomes of the corresponding Bell measurements. Similarly, let $\mathcal{B}$ be a list of the indices of the EPR pairs that Bob uses, and let $b_x, b_z \in \{0,1\}^{|\mathcal{B}|}$ be the bit strings that
hold the measured X and Z corrections. 

To show the second part of the statement, we will construct a garden-hose protocol which tracks the newly-incurred Pauli corrections from teleporting the qubit back-and-forth, by following the qubit through the path
defined by $\mathcal{A}$ and~$\mathcal{B}$.

We will first construct the protocol for the final X-correction, a function we denoted by $g$. The protocol is also schematically shown as Figure~\ref{fig:gh-x}. Note that to compute the X correction the conditional presence of the phase gate is not important:
independent of whether $f(x,y)$ equals 1 or 0, we only need to track the X teleportation corrections that the qubit incurred by being teleported back-and-forth by Alice and Bob.
An efficient garden-hose protocol for $g$ is given by the following.

Use two pipes for each EPR pair in the protocol, $2 \gh(f)$ pairs of 2 pipes each. Label the top pipe of some pair $i$ by $I_i$, and the bottom pipe by $X_i$.
We will iterate over all elements of $\mathcal{A}$, i.e.\ all performed Bell measurements by Alice. Consider some element of $\mathcal{A}$, say the $k$-th pair $\mathcal{A}_k$ which consists of $\{i,j\}$. If the corresponding correction $b_{x,k}$ equals 0, we
connect the pipe labeled $I_i$ with the pipe labeled $I_j$ and the pipe labeled $X_i$ with the pipe labeled $X_j$. Otherwise, if $b_{x,k}$ equals 1, we connect them crosswise, so we connect $I_i$ with $X_j$ and $X_i$ with $I_j$.
Finally, the place where the qubit ends up after the protocol is unique (and is the only unmeasured qubit out of all $2 \gh(f)$ EPR pairs). For the set of open pipes corresponding to that EPR pair, say number $i^*$, we use one extra pipe to which 
we connect $X_{i^*}$, so that the water ends up at Bob's side for the 1-output.
This garden-hose protocol computes the X correction on the qubit, and uses $4 \gh(f) + 1$ pipes in total, therefore $\gh(g) \leq 4\gh(f) + 1$.

\begin{figure}[h]
 \begin{center}
\begin{tikzpicture}

\coordinate (psi) at (3, 11);
\coordinate (eprA1) at (3,10);
\coordinate (eprB1) at (9,10);
\coordinate (eprA2) at (3,9);
\coordinate (eprB2) at (9,9);
\coordinate (eprA3) at (3,8);
\coordinate (eprB3) at (9,8);

\coordinate (tap) at (3,6.4);

\coordinate (pipeA1) at (3,5.6);
\coordinate (pipeB1) at (9,5.6);
\coordinate (pipeA1X) at (3, 5);
\coordinate (pipeB1X) at (9, 5);

\coordinate (pipeA2) at (3, 4.2);
\coordinate (pipeB2) at (9, 4.2);
\coordinate (pipeA2X) at (3, 3.6);
\coordinate (pipeB2X) at (9, 3.6);

\coordinate (pipeA3) at (3, 2.8);
\coordinate (pipeB3) at (9, 2.8);
\coordinate (pipeA3X) at (3, 2.2);
\coordinate (pipeB3X) at (9, 2.2);

\draw[fill=white] (psi) circle (0.1cm) node[anchor=west] {$\ket{\psi}$};

\draw[fill=white] (eprA1) circle (0.1cm) node[anchor=east] {};
\draw[fill=white] (eprB1) circle (0.1cm) node[anchor=east] {};
\draw[thick, snake it] ($ (eprA1) + (0.5,0) $) -- node[above]{ {\small EPR pair 1} } ($ (eprB1) - (0.5,0) $);

\draw[fill=white] (eprA2) circle (0.1cm) node[anchor=east] {};
\draw[fill=white] (eprB2) circle (0.1cm) node[anchor=east] {};
\draw[thick, snake it] ($ (eprA2) + (0.5,0) $) -- node[above]{ {\small EPR pair 2} } ($ (eprB2) - (0.5,0) $);

\draw[fill=white] (eprA3) circle (0.1cm) node[anchor=east] {};
\draw[fill=white] (eprB3) circle (0.1cm) node[anchor=east] {};
\draw[thick, snake it] ($ (eprA3) + (0.5,0) $) -- node[above]{ {\small EPR pair 3} } ($ (eprB3) - (0.5,0) $);

\draw ($ (psi) - (0.3,0) $) to[out=200, in=160] node[left] {$a_{x,1}, a_{z,1}$ } ($ (eprA1) - (0.3,0) $);
\draw ($ (eprA2) - (0.3,0) $) to[out=200, in=160] node[left] {$a_{x,2}, a_{z,2}$ } ($ (eprA3) - (0.3,0) $);
\draw ($ (eprB1) + (0.3,0) $) to[out=-20, in=20] node[right] {$b_{x,1}, b_{z,1}$ } ($ (eprB2) + (0.3,0) $);

\node[draw, style=single arrow, fill=lightgray, rotate=-90] at (6,7.2) { $\quad\,$ };

\draw[fill=black] (tap) circle (0.1cm) node[anchor=west] {\hspace{0.2cm}\emph{tap}};

\draw[fill=black] (pipeA1) circle (0.05cm);
\draw[fill=black] (pipeB1) circle (0.05cm);
\draw[very thick] (pipeA1)  -- node[above] {$I_1$} (pipeB1);

\draw[fill=black] (pipeA1X) circle (0.05cm);
\draw[fill=black] (pipeB1X) circle (0.05cm);
\draw[very thick] (pipeA1X)  -- node[above] {$X_1$} (pipeB1X);

\draw[fill=black] (pipeA2) circle (0.05cm);
\draw[fill=black] (pipeB2) circle (0.05cm);
\draw[very thick] (pipeA2)  -- node[above] {$I_2$} (pipeB2);

\draw[fill=black] (pipeA2X) circle (0.05cm);
\draw[fill=black] (pipeB2X) circle (0.05cm);
\draw[very thick] (pipeA2X)  -- node[above] {$X_2$} (pipeB2X);

\draw[fill=black] (pipeA3) circle (0.05cm);
\draw[fill=black] (pipeB3) circle (0.05cm) node[right]{{\small(out if $a_{x,1} \oplus b_{x,1} \oplus a_{x,2} = 0$)}};
\draw[very thick]  (pipeA3) -- node[above] {$I_3$} (pipeB3);

\draw[fill=black] (pipeA3X) circle (0.05cm);
\draw[fill=black] (pipeB3X) circle (0.05cm) node[right]{{\small(out if $a_{x,1} \oplus b_{x,1} \oplus a_{x,2} = 1$)}};
\draw[very thick] (pipeA3X) -- node[above] {$X_3$} (pipeB3X);

\draw[thick] ($ (tap) - (1.3,0) $) to[out=200, in=160] ($ (pipeA1) - (1.3,0) $);
\node at ($ (tap) - (1.5,0) + (0,0.2) $) {\scriptsize{$a_{x,1} \!=\! 0$} };

\draw[thick]  ($ (tap) - (0.3,0) $) to[out=200, in=160] ($ (pipeA1X) - (0.3,0) $);
\node at ($ (tap) - (0.35,0) + (0,0.2) $) {\scriptsize{$a_{x,1} \!=\! 1$} };

\draw[thick] ($ (pipeA2) - (1.3,0) $) to[out=200, in=160] ($ (pipeA3) - (1.3,0) $);
\draw[thick] ($ (pipeA2X) - (1.3,0) $) to[out=200, in=160] ($ (pipeA3X) - (1.3,0) $);
\node at ($ (pipeA2) - (1.5,0) + (0,0.2) $) {\scriptsize{$a_{x,2} \!=\! 0$} };

\draw[thick] ($ (pipeA2) - (0.3,0) $) to[out=200, in=160] ($ (pipeA3X) - (0.3,0) $);
\draw[thick] ($ (pipeA2X) - (0.3,0) $) to[out=200, in=160] ($ (pipeA3) - (0.3,0) $);
\node at ($ (pipeA2) - (0.35,0) + (0,0.2) $) {\scriptsize{$a_{x,2} \!=\! 1$} };

\draw[thick] ($ (pipeB1) + (0.3,0) $) to[out=-20, in=20] ($ (pipeB2) + (0.3,0) $);
\draw[thick] ($ (pipeB1X) + (0.3,0) $) to[out=-20, in=20] ($ (pipeB2X) + (0.3,0) $);
\node at ($ (pipeB1) + (0.35,0) + (0,0.2) $) {\scriptsize{$b_{x,1} \!=\! 0$} };

\draw[thick] ($ (pipeB1) + (1.3,0) $) to[out=-20, in=20]  ($ (pipeB2X) + (1.3,0) $);
\draw[thick] ($ (pipeB1X) + (1.3,0) $) to[out=-20, in=20] ($ (pipeB2) + (1.3,0) $);
\node at ($ (pipeB1) + (1.5,0) + (0,0.2) $) {\scriptsize{$b_{x,1} \!=\! 1$} };

\end{tikzpicture}
 \end{center}
\caption{Example garden-hose protocol to compute the Pauli X incurred by Alice and Bob teleporting a qubit back-and-forth. When a teleportation requires a Pauli X correction, the corresponding
pipes are connected crosswise, and otherwise they are connected in parallel.}
\label{fig:gh-x}
\end{figure}
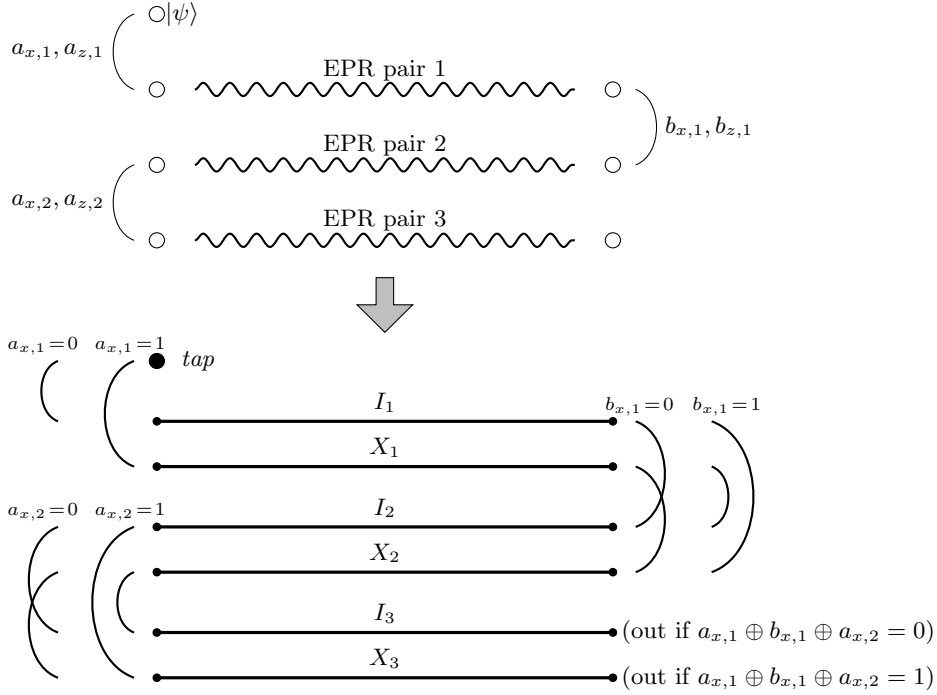

For the Z-correction we can build a garden-hose protocol using the same idea, but there is one complication we have to take care of. At the start of the protocol, there might be an unwanted phase gate present on the state.
If some teleportation is performed before
this phase gate is corrected, say by Alice with outcomes $a_x, a_z$, then the effective correction can be written as $\X^{a_x} \Z^{a_z} \P = \P \X^{a_x} \Z^{a_x \oplus a_z}$.
That is, for the part of the protocol that the unwanted phase gate is present, a Bell measurement gives a Z-correction whenever the \emph{exclusive or} of the X- and Z-outcomes is 1, instead of just
when the Z-outcome is 1. We will therefore use the garden-hose protocol that computes whether $f(x,y)=1$, that is, compute whether the phase gate is present, and then
execute a slightly different garden-hose protocol for each case. 

\begin{figure}[ht]
\begin{center}
\begin{tikzpicture}

\coordinate (tap) at (0.8,9.5);
\coordinate (box1t) at (1,10);
\coordinate (box1b) at (7,7);

\coordinate (box2t) at (1,6);
\coordinate (box2b) at (7,4);

\coordinate (box3t) at (1,3);
\coordinate (box3b) at (7,0);
\coordinate (box3ml) at ($(box3t) - (0, 1.5)$);
\coordinate (box3mr) at ($(box3b) + (0, 1.5)$);

\draw[thick] (box1t) rectangle node[align=center, thick] {Unique-output GH protocol \\ for $f(x,y)$ \\ \\(Lemma~\ref{lem:ghsingleoutput})} (box1b);
\draw[thick] (box2t) rectangle node[align=center, thick] {Compute correction
using\\Z outcomes} (box2b);
\draw[thick] (box3t) rectangle (box3b);
\draw[draw=none] (box3t) rectangle node[align=center, thick] {Compute correction using \\ X$\oplus$Z outcomes of first part} (box3mr);
\draw[draw=none] (box3ml) rectangle node[align=center, thick] {Compute correction using \\ Z outcomes of the rest} (box3b);
\draw[thick, dashed] (box3ml) -- (box3mr);

\draw[fill=black] ($(tap) - (0.2,0)$) circle (0.1cm) node[anchor=east] {\emph{tap }\hspace{0.4cm}};

\draw[thick]  ($ (tap) - (0, 0) $) to ($ (box1t) - (0, 0.5) $);
\draw[thick]  ($ (box1t) - (0, 1.5) $) to[out=200, in=160] node[anchor=east, fill=white, fill opacity=1] {$f(x,y)=0$} ($ (box2t) - (0, 0.5) $);
\draw[thick]  ($ (box1t) - (0, 2.5) $) to[out=200, in=160] node[anchor=east, fill=white, fill opacity=1] {$f(x,y)=1$} ($ (box3t) - (0, 0.5) $);

\draw[fill=black, thick]  ($ (box2t) - (0.3, 1.5)$) circle(0.05cm) node[anchor=east] {0} -- ($ (box2t) - (0, 1.5)$);
\draw[fill=black, thick]  ($ (box2t) + (6.3, -1.5)$) circle(0.05cm) node[anchor=west] {1} -- ($ (box2t) + (6, -1.5)$);

\draw[fill=black, thick]  ($ (box3t) - (0.3, 2.5)$) circle(0.05cm) node[anchor=east] {0} -- ($ (box3t) - (0, 2.5)$);
\draw[fill=black, thick]  ($ (box3t) + (6.3, -2.5)$) circle(0.05cm) node[anchor=west] {1} -- ($ (box3t) + (6, -2.5)$);

\end{tikzpicture}
\end{center}
\caption{Sketch of garden-hose protocol for the Z correction. The bottom two boxes use the construction which was used for the X-correction; in the top case using the Z-outcomes for all measurements, in the bottom case using the
parity of the X- and Z-outcomes for those teleportations that happened before the unwanted phase gate was removed.}
\label{fig:gh-z}
\end{figure}
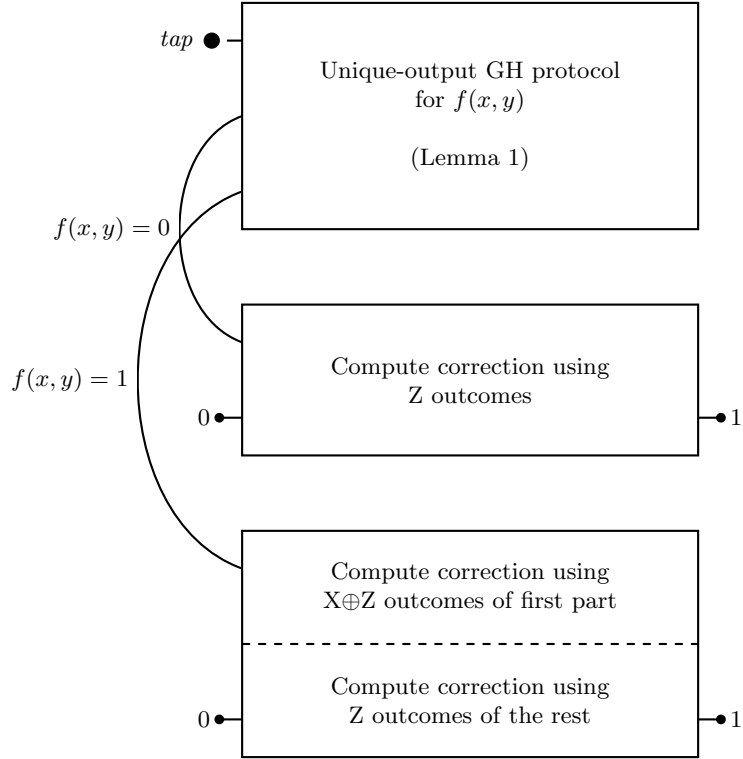

See Figure~\ref{fig:gh-z} for an overview of the different parts of this garden-hose protocol for the Z-correction $h$.
Using Lemma~\ref{lem:ghsingleoutput} we can transform the garden-hose protocol for $f$ into a garden-hose protocol for $f$ with unique 0 and 1 outputs at Alice's side, of size $3 \gh(f)$.\footnote{If the unique 0 output has to be at Alice's side, and
the unique 1 output at Bob's side, the construction uses $3\gh(f)+1$ pipes.
It is an easy exercise to show that the construction of Lemma~\ref{lem:ghsingleoutput} needs one pipe less if Alice
wants to have both the designated 0 output and the 1 output.}  For the 0 output, that is if there was no unwanted phase gate present, we can track the Z corrections in exactly the same way as we did for the X corrections, for a subprotocol of size $4\gh(f)+1$.
For the 1 output there was in fact a phase gate present, for the teleportations that happened in the protocol before the $\P^{-1}$ corrections. For that part of the protocol, we execute the correction-tracking protocol using the XOR of the X- and Z-measurement
outcomes. For all teleportations after the phase correction, we again track the correction using just the Z-outcomes, since there is no phase gate present anymore. This part of the garden-hose protocol also uses $4\gh(f)+1$ pipes, for a total of $11 \gh(f) + 2$.
\end{proof}

\section{Low T-depth quantum circuits}\label{sec:tdepth}\index{T-depth}
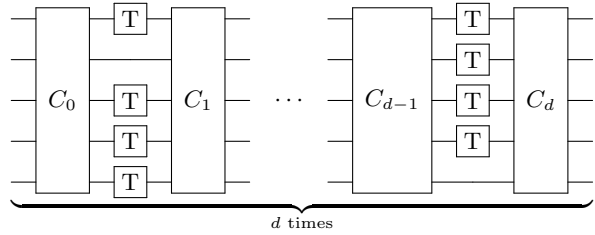
\begin{figure}[ht]
\begin{center}
$
\underbrace{
\Qcircuit @C=1em @R=.4em {
& \multigate{4}{C_0} & \gate{\T} & \multigate{4}{C_1} & \qw & & & \multigate{4}{C_{d-1}} & \gate{\T} & \multigate{4}{C_{d}} & \qw \\
& \ghost{C_0} 		 & \qw & \ghost{C_1} 		  & \qw & & & \ghost{C_{d-1}} & \gate{\T} & \ghost{C_{d}} & \qw\\
& \ghost{C_0} 		 & \gate{\T} & \ghost{C_1} 		  & \qw & \push{\cdots} & & \ghost{C_{d-1}} & \gate{\T} & \ghost{C_{d}} & \qw\\
& \ghost{C_0} 		 & \gate{\T} & \ghost{C_1} 		  & \qw & & & \ghost{C_{d-1}} & \gate{\T} & \ghost{C_{d}} & \qw\\
& \ghost{C_0} 		 & \gate{\T} & \ghost{C_1} 		  & \qw & & & \ghost{C_{d-1}} & \qw & \ghost{C_{d}} & \qw\\
}
}_{\text{$d$ times}}
$\end{center}
\caption{An example circuit with T-depth $d$. The $C_i$ gates represent subcircuits consisting only of operations from the Clifford group $\C$. A layer does not necessarily have a $\T$ gate on all wires.}
\label{fig:tdepth}
\end{figure}

\begin{theorem}\label{thm:tdepth}
Let $C$ be an $n$-qubit quantum circuit with gates out of the Clifford+T gate set, where $C$ has T-depth $d$. Then there exists a protocol for two-party instantaneous non-local computation of $C$, where each party receives $n/2$ qubits, which uses a pre-shared entangled state of $O(\, (68n)^{d}\, )$ EPR pairs.
That is, $\INQC(C) \leq O(\, (68n)^{d}\, )$.
\end{theorem}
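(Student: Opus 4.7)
The plan is to process the circuit layer-by-layer, exploiting the parallelism within each T-layer in exactly the way Lemma~\ref{lem:nonlocalphase} permits: a single T-layer contains at most $n$ conditional phase gates that can all be removed in parallel using that lemma. The protocol will maintain, as an invariant after $t$ layers, that Alice holds the state $\X^{a_x^t}\Z^{a_z^t}\ket{\psi_t}$, where every entry of the keys $a_x^t, a_z^t$ is a function of the accumulated classical measurement outcomes whose garden-hose complexity is bounded by some quantity $G_t$.

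First I would have Bob teleport his $n/2$ qubits to Alice, as in Step~0 of the proof of Theorem~\ref{thm:tcount}; at this point all key bits are just single measurement-outcome bits, so $G_0 = 1$. Then, for each layer $t = 1,\dots,d$:
\begin{enumerate}
\item Alice locally applies the Clifford subcircuit $C_{t-1}$ on the uncorrected register. By the key-transformation rules of Section~\ref{sec:paulitransform}, each new key bit is an $\mathbb{F}_2$-linear combination of at most $n$ old key bits, so by Lemma~\ref{lem:sumgh} its garden-hose complexity is bounded by $4 n G_{t-1} + 1$.
\item Alice applies the parallel layer of $\T$ gates. By the identity $\T\X = \P\X\T$, this creates, on each wire $w$ carrying a $\T$, an unwanted $\P^{a_{x,w}}$ factor conditional on the (Bob-known) X-key bit $a_{x,w}$.
\item In parallel for every such wire $w$, Alice and Bob invoke Lemma~\ref{lem:nonlocalphase} with $f := a_{x,w}$. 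This consumes $2\gh(a_{x,w}) \le 2(4nG_{t-1}+1)$ EPR pairs per wire and replaces the unwanted phase with new $\X$- and $\Y$-corrections whose describing functions have garden-hose complexities at most $4\gh(f)+1$ and $11\gh(f)+2$, respectively. Since $\Y = i\X\Z$, those corrections feed into both the $\X$- and $\Z$-key for the next layer.
\end{enumerate}
After the last Clifford subcircuit $C_d$, Alice teleports Bob's $n/2$ output wires back to him (as in the final step of Theorem~\ref{thm:tcount}), and all parties exchange their measurement outcomes in the single permitted round of simultaneous communication, after which the remaining Pauli corrections are applied locally.

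The main obstacle, and essentially the entire content of the proof, is the bookkeeping that establishes the recurrence $G_t \le 68\,n\,G_{t-1}$. Step (i) multiplies the complexity by roughly $4n$ (Lemma~\ref{lem:sumgh}), and step (iii) multiplies it by the constant $11+\mathrm{const}$ coming from the $\Y$-correction bound of Lemma~\ref{lem:nonlocalphase} (the $\Y$-part dominates because it contributes to both $\X$- and $\Z$-keys of the next round); combining these yields the constant $68$ after a careful accounting that folds the additive $+1,+2$ terms into the leading factor. Given the recurrence, $G_d \le (68n)^d$ and the total entanglement used over all layers is
\[
n/2 \;+\; \sum_{t=1}^{d} 2n\,G_t \;\le\; n/2 + 2n\sum_{t=1}^{d} (68n)^t \;=\; O\bigl((68n)^d\bigr),
\]
since the geometric sum is dominated by its last term. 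This establishes $\INQC(C) \le O((68n)^d)$.
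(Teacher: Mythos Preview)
Your proposal is correct and follows exactly the paper's strategy: apply Lemma~\ref{lem:nonlocalphase} in parallel to every wire of a T-layer, track the garden-hose complexity of the Pauli-key functions, and control its growth per layer via Lemma~\ref{lem:sumgh}. Two small bookkeeping points: a general Clifford can make each new key bit an $\mathbb{F}_2$-combination of all $2n$ (not $n$) old key bits; and to land on the constant $68$ the paper places the Clifford subcircuit \emph{after} the T-layer and the phase-removal within each round, so that Lemma~\ref{lem:sumgh} is invoked only once per layer on the $4n$ constituent functions $f^{t-1}_{x,i}, g^t_i, f^{t-1}_{z,i}, h^t_i$ --- with your ordering you must take the same care not to combine keys twice, or the constant inflates (though the $O((cn)^d)$ shape survives).
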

\begin{proof}
As in the proof of Theorem~\ref{thm:tcount}, we write the input state $\ket{\psi}$, and write the correct quantum state after step $t$ of the circuit as $\ket{\psi_t}$.
At a step $t$, the circuit alternates between a \emph{layer} of T gates\footnote{We will assume that for each layer of T gates \emph{all} wires have a T gate.
This is only done to avoid introducing extra notation needed when instead the gates are only applied to a subset -- the protocol easily generalizes to the more common general situation.} and a subcircuit consisting of only Clifford gates, $C_t$.

The high-level idea of this protocol is as follows. During steps 1 to $t$, Alice will hold the entire
uncorrected state and performs a layer of the circuit: 
she performs a layer of T gates and then a Clifford subcircuit. 
The Pauli corrections at each step are a function of earlier
teleportation outcomes of both Alice and Bob.
These functions determine for
each qubit whether that qubit now has obtained an unwanted extra $\P$ gate when Alice performs the layer of T gates.
The players then, for each qubit, correct this extra gate using Lemma~\ref{lem:nonlocalphase} --
removing the unwanted phase gate from the qubit in a way that both players still know its location. 

At each step we express the corrections as functions of earlier measurements
and consider their garden-hose complexity, which is important when
using Lemma~\ref{lem:nonlocalphase}. The Clifford subcircuit
takes the correction functions to the XOR of several earlier functions. We can bound
the growth in garden-hose complexity by taking XORs using Lemma~\ref{lem:sumgh}.
Taken together, the garden-hose complexity grows with a factor of at most a constant times $n$ each step.

We will use $f^t_{x,i}$ to denote the function that describes the presence of an X correction on qubit $i$, at step $t$ of the protocol. Similarly, $f^t_{z,i}$ is the function that describes the Z correction on qubit $i$
at step $t$.
Both will always be functions of outcomes of earlier teleportation
measurements of Alice and Bob.
For any $t$, let $m_t$ be the maximum garden-hose complexity over all the key functions at step $t$. 

\begin{description}
\item[Step 0] Bob teleports his qubits, the qubits labeled $n/2$ up to $n$, to Alice, obtaining the measurement outcomes $b^{0}_{x,1}, \dots, b^{0}_{x,n/2}$ and $b^{0}_{z,1}, \dots, b^{0}_{z,n/2}$.
On these uncorrected qubits, Alice executes the Clifford subcircuit $C_0$.

Then, since Bob also knows how $C_0$ transforms the keys,
the functions describing the Pauli corrections can all either be described
by a single bit of information which is locally computable by Bob, or are constant and therefore
known by both players. Let $f^{0}_{x,i}$ and $f^{0}_{z,i}$ be the resulting key function for any qubit $i$.
The garden-hose
complexity of all these key functions is constant: $\gh(f^{0}_{x,i}) \leq 3$ and $\gh(f^{0}_{z,i}) \leq 3$,
and therefore also for the maximum garden-hose complexity we have $m_0 \leq 3$. 

\item[Step $t=1,\dots,d$]

At the start of the step, the X and Z corrections on any wire $i$ are given by $f^{t-1}_{x,i}$ and $f^{t-1}_{z,i}$ respectively.

Alice applies the $\T$ gates on all wires. Any wire $i$ now
has an unwanted $\P$ if and only if 
$f^{t}_{x,i}$ equals 1.

Alice and Bob apply the construction of Lemma~\ref{lem:nonlocalphase},
which removes this unwanted phase gate. Let $g^t_i$ be the function
describing the extra X correction incurred by this protocol, so
that the new X correction can be written as $f^{t}_{x,i} \oplus g^t_i$. Let
$h^t_i$ be the function describing the Z correction, so that 
the total Z correction is $f^{t}_{z,i} \oplus h^t_i$.
The entanglement cost of this protocol is given by $2\gh(f^{t}_{x,i})$
and the garden-hose complexities of the new functions are at most
$\gh(g^t_i) \leq 4 \gh(f^{t}_{x,i}) + 1$ and 
$\gh(h^t_i) \leq 11 \gh(f^{t}_{x,i}) + 2$.

Alice now executes the Clifford subcircuit $C_t$. 
The circuit $C_t$ determines how the current Pauli corrections, i.e.\ the key functions, transform.  For a specification
of the possible transformations, see Section~\ref{sec:paulitransform}. These new keys 
are formed by taking the exclusive OR of some subset of keys that were present in the previous step\footnote{This is slightly more general than necessary, since not all possible key transformations of this form are actually possible -- only those transformations
generated by the possibilities in Section~\ref{sec:paulitransform} can occur.}.

Consider the worst case key for our construction: a key which is given by the XOR of all keys that were present when the Clifford subcircuit was executed. 
Applying Lemma~\ref{lem:sumgh}, the worst-case key function of the form $\bigoplus_{i=1}^n f^{t-1}_{x,i} \oplus g^{t}_{i} \oplus f^{t-1}_{z,i} \oplus h^{t}_{i}$ has garden-hose complexity at most
\begin{align}
m_t & \leq 4\left(\sum_{i=1}^n \gh(f^{t-1}_{x,i}) + \gh(g^{t}_{i}) + \gh(f^{t-1}_{z,i}) +\gh( h^{t}_{i})\right) + 1 \nonumber \\
& \leq 4\left(\sum_{i=1}^n \gh(f^{t-1}_{x,i}) + 4 \gh(f^{t-1}_{x,i}) + 1 +  \gh(f^{t-1}_{z,i}) +11 \gh(f^{t-1}_{x,i}) + 2\right) + 1 \nonumber\\
& \leq 4\left(\sum_{i=1}^n m_{t-1} + 4 m_{t-1} + 1 + m_{t-1} +11 m_{t-1} + 2\right) + 1 \nonumber \\
& = 68 n m_{t-1} + 12 n + 1 \label{eq:mt-recurrence} \,.
\end{align}

\item[Step $d+1$, final step] Alice teleports the last $n/2$ qubits back to Bob.
Alice and Bob exchange all results of teleportation measurements and locally perform the needed corrections, using both players' measurement outcomes.
\end{description}

At every step $t$, the protocol uses at most $2 n m_{t-1}$ EPR pairs for the protocol
which corrects the phase gate. Using that $m_0 \leq 3$, we can write the upper bound of Equation~\ref{eq:mt-recurrence} as 
the closed form $m_t \leq c_1 (68n)^t + c_2$, with $c_1 = \frac{216n-2}{68n-1} \approx \frac{54}{17}$ and $c_2 = 3-\frac{216n-2}{68n-1} \approx -\frac{3}{17}$. The total entanglement use therefore is bounded by $
 \sum_{t=1}^d 2 n m_{t-1} \leq O(\, (68n)^{d}\, )$.
\end{proof}

\section{The Interleaved Product protocol}\label{sec:interleavedattack}
Chakraborty and Leverrier~\cite{CL15} recently proposed a scheme for quantum position verification based on the interleaved multiplication of unitaries, the \emph{Interleaved Product protocol}, 
denoted by $\GIP(n,t,\etaerr, \etaloss)$. The parameter $n$ concerns the number of qubits that are involved in the protocol in parallel, while $t$ scales with the amount of classical information that the protocol uses.
Their paper analyzed several different attacks on this scheme, which all required exponential entanglement in the parameter $t$.
In this section, as an application of the proof strategy of Theorem~\ref{thm:tdepth}, we present an attack on the Interleaved Product protocol which requires entanglement polynomial in $t$.

The original protocol is described in terms of the actions of hypothetical honest parties and also involves checking of timings at spatial locations.
For simplicity, we instead only describe a two-player game, for players Alice and Bob, such that a high probability of winning this game suffices to break the scheme.
Let $x$ be a string $x \in_R \{0,1\}^n$, and let $U$ be a random (single-qubit) unitary operation, i.e.\ a random element of $U(2)$. Alice receives $t$ unitaries $(u_i)_{i=1}^t$, and Bob receives $t$ unitaries $(v_i)_{i=1}^t$
such that $U = \prod_{i=1}^t u_i v_i$.
Alice receives the state $U^{\otimes n} \ket{x}$. The players are allowed one round of simultaneous communication.
To break the protocol $\GIP(n,t,\etaerr, \etaloss)$, after the round of simultaneous communication the players need to output an identical string $y \in \{\emptyset, 0,1 \}^n$ such that 
the number of bits where $y$ is different from $x$ is at most $\etaerr n$ and the number of empty results $\emptyset$ is at most $\etaloss n$.
We will consider attacks on the strongest version of the protocol, where we take $\etaloss=0$.

\begin{theorem}\label{thm:attackipp}
There exists an attack on $\GIP(n, t, \etaerr, \etaloss=0)$ that requires $p(t/{\etaerr})$ EPR pairs per qubit of the protocol,
for some polynomial $p$, and succeeds with high probability.
\end{theorem}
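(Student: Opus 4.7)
The plan is to give Alice and Bob a protocol that approximately undoes $U$ on the received state $U^{\otimes n}\ket{x}$ and then measures each qubit in the computational basis to recover $x$. Because $U^{\otimes n}$ is tensorial, it suffices to construct a per-qubit attack using $\mathrm{poly}(t/\etaerr)$ EPR pairs which succeeds with probability at least $1-\etaerr$; Alice and Bob then exchange their per-qubit decoded bits in the single simultaneous round of communication so both output the same string, and a Chernoff-type argument controls the total Hamming error against the target $\etaerr n$ (using $\etaloss=0$).

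To construct the per-qubit attack, I would first apply Solovay--Kitaev to each $u_i$ and $v_i$, obtaining Clifford+T approximations of error $\etaerr/(ct)$ for a suitable constant $c$, with T-count $\mathrm{polylog}(t/\etaerr)$ per factor. Composing all $2t$ factors produces a single-qubit Clifford+T circuit $\widetilde{U}^{\dagger}$ of total T-count $K=O(t\cdot\mathrm{polylog}(t/\etaerr))$, whose Clifford subcircuits alternate between blocks known to Alice (from the $u_i$) and blocks known to Bob (from the $v_i$). I would then run the construction of Theorem~\ref{thm:tdepth} on this single-qubit circuit, with Alice holding the qubit throughout: Clifford subcircuits are applied and tracked symbolically on the Pauli key, and after each of the $K$ T gates the parties invoke Lemma~\ref{lem:nonlocalphase} to remove the conditional $\P$-gate that depends on the current X-key.

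The heart of the argument is showing that, for this particular circuit and unlike in the generic analysis of Theorem~\ref{thm:tdepth} whose key-complexity bound $m_t$ blows up as $O(68^K)$, the garden-hose complexity of every X-key condition function stays $O(K)$. The key observation is that each such condition function is a linear function over $\mathbb{F}_2$ of the teleportation outcomes produced so far, with coefficients determined by Cliffords that are each locally known to a single party. For a linear function of $s$ bits over $\mathbb{F}_2$, the garden-hose complexity is $O(s)$ by the standard XOR-routing strategy underlying Lemma~\ref{lem:sumgh}, and I would argue that even when the coefficients depend on one party's private input, the XOR-routing on that party's side can be arranged locally without multiplicative overhead. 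Consequently each invocation of Lemma~\ref{lem:nonlocalphase} costs $O(K)$ EPR pairs, and the per-qubit total is $O(K^2)=\mathrm{poly}(t/\etaerr)$.

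The main obstacle is precisely this last step: carefully controlling how the Pauli-key transformation matrices from alternating Alice- and Bob-known Cliffords interact, and confirming that the combined linear condition function really does admit a linear-sized garden-hose protocol even though neither player individually knows all of its coefficients. A fallback, if direct XOR-routing fails, is to have each party absorb their private coefficient contributions into auxiliary classical bits routed through their own side of the garden-hose protocol; this increases the constants but should preserve the polynomial scaling in $t/\etaerr$. A secondary, easier, obstacle is the error budget: one has to verify that the per-qubit approximation error $O(t)\cdot\etaerr/(ct)$ together with any Pauli corrections that are still outstanding at measurement time amounts to an $\etaerr$-bias per output bit.
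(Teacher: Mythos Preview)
Your approach differs from the paper's in a crucial way, and the difference is exactly where the gap lies. You decompose each factor $u_i,v_i$ separately via Solovay--Kitaev, producing a circuit of T-count $K=\Theta\bigl(t\cdot\mathrm{polylog}(t/\etaerr)\bigr)$. The paper instead first routes the qubit through a garden-hose protocol that \emph{computes the product} $U=\prod_i u_iv_i$ approximately: iterated $2\times2$ matrix multiplication with rounding is a $O(\log t + \log 1/\eps)$-space computation, hence has polynomial garden-hose complexity by Theorem~\ref{thm:logspace} and Proposition~\ref{prop:multi-output}. After that routing step, the unitary attached to the output pipe is a single known matrix $\tilde U$, and Theorem~\ref{thm:selingertcount} gives a Clifford+T circuit for $\tilde U^{\dagger}$ with T-count only $k=O(\log(1/\eps))$. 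The single-qubit recurrence is $m_r\le 68\,m_{r-1}+13$, so the paper's cost is $m_0\cdot 68^{O(\log 1/\eps)}=\mathrm{poly}(t/\eps)$, whereas applying the same recurrence to your circuit yields $68^{\Theta(t)}$, exponential in~$t$.

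Your proposed escape hatch---that the X-key conditions stay linear in the outcomes and hence have garden-hose complexity $O(K)$---does not hold. After a single invocation of Lemma~\ref{lem:nonlocalphase} the new corrections $g,h$ are not XORs of a one-sidedly selectable subset of bits: $g$ is the XOR of Bell-measurement X-outcomes \emph{along the path the qubit actually took} through the garden-hose protocol for $f$, and that path is jointly determined by both players' connections (Figure~\ref{fig:gh-x}); the Z-correction $h$ even branches on the value of $f$ itself (Figure~\ref{fig:gh-z}). These are not linear forms with coefficients known to a single party, and subsequent Cliffords do not restore that structure, so the multiplicative growth of $m_r$ is exactly what you should expect. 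Your fallback of ``absorbing private coefficient contributions into auxiliary bits'' does not address this nested path-dependence. There is also a more basic gap: with ``Alice holding the qubit throughout'' she would have to physically apply Bob's Clifford subcircuits (coming from the $v_i$), which she does not know; if instead you teleport back and forth so each party applies their own block, the Pauli-key transformation becomes an interleaved product of $\Theta(t)$ privately held $2\times2$ matrices over $\mathbb{F}_2$---the same kind of two-party product you started with. That product \emph{can} be handled, but only via the log-space/garden-hose equivalence (Theorem~\ref{thm:logspace}), which is precisely the ingredient the paper uses and your proposal omits.
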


It was shown in~\cite{BFSS13} that polynomial garden-hose complexity is equivalent to log-space computation -- up to a local preprocessing of the inputs.
Instead of directly presenting garden-hose protocols, for the current construction it will be easier to argue about space-bounded algorithms and then using this equivalence as a black-box translation.
\begin{theorem}[Theorem~2.12 of \cite{BFSS13}]\label{thm:logspace}
	If $f : \{0,1\}^n \times \{0,1\}^n \to \{0,1\}$ is log-space computable, then $\gh(f)$ is polynomial in $n$.
\end{theorem}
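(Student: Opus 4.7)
My plan is to construct a garden-hose protocol of polynomial size that directly simulates a log-space Turing machine $M$ computing $f$. A key preliminary reduction is to invoke the theorem of Lange, McKenzie, and Tapp ($\mathsf{L} = \mathsf{RevL}$), which lets me assume without loss of generality that $M$ is \emph{reversible}: its extended transition function $(c, \text{bit read}) \mapsto c'$ is injective. Because $M$ uses $O(\log n)$ work space, it has only $\text{poly}(n)$ distinct configurations. By padding the transition function with identity reads, I further assume each configuration reads exactly one input bit, and label each configuration ``Alice-owned'' if it reads a bit of $x$ and ``Bob-owned'' if it reads a bit of $y$.

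As a preprocessing step, I would modify $M$ so that reads strictly \emph{alternate} between Alice's and Bob's input along every computation: insert a dummy Bob-read after each Alice-read and vice versa, where the dummy reads a bit of the other party's input tape but ignores its value. Standard techniques keep the modification reversible and log-space, and the number of configurations at most doubles.

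The protocol uses one pipe $P_c$ per configuration $c$. For each Alice-owned $c$ with successor $c' = \delta(c, x_{i_c})$ determined by her input, Alice pairs the Alice-end of $P_c$ with the Alice-end of $P_{c'}$; Bob wires the symmetric pairings on his side for his configurations. The tap is attached to Alice's end of the pipe for the start configuration, and the pipes for the accept and reject halting configurations are left open on Bob's and Alice's sides respectively, so that water emerges on the side indicated by $f(x,y)$.

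The main obstacle, and where both alternation and reversibility are essential, is verifying that each player's pairings constitute a valid partial matching. Alternation of ownership ensures that the Alice-end of an Alice-owned pipe is used by Alice only as a \emph{source} (its predecessor in the trajectory is Bob-owned, so handled by Bob on his side), while the Alice-end of a Bob-owned pipe is used by Alice only as a \emph{destination} — preventing any end from appearing in two pairs on the same side of the same type. Reversibility further guarantees that the map $c \mapsto \delta(c, x_{i_c})$ is injective on the Alice-owned configurations for any fixed $x$, so no destination Alice-end is targeted twice. Together these properties make water trace precisely the computation path of $M$ and exit on the correct side, using $\text{poly}(n)$ pipes in total. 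Hence $\gh(f) = \text{poly}(n)$.
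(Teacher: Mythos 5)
The paper itself gives no proof of this statement: it is imported verbatim as Theorem~2.12 of \cite{BFSS13}, so the only ``proof'' in the text is the citation. Your reconstruction follows essentially the same route as the original argument in that reference: make the log-space machine reversible via Lange--McKenzie--Tapp, allocate one pipe per configuration (of which there are only polynomially many), enforce alternation of reads, and let each player wire the transitions out of the configurations that read their half of the input. Your analysis of why alternation plus reversibility makes each player's wiring a valid partial matching --- sources and destinations live on disjoint sets of pipe-ends, and injectivity of the successor map prevents destination collisions --- is correct and is the heart of the argument.

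There is one step you assert without justification, and as literally stated it can fail: that the water ``emerges on the side indicated by $f(x,y)$'' at the halting configurations. In your construction, the side on which the water exits $P_{\mathrm{acc}}$ is determined by which player owns the predecessor of the accepting configuration on the actual computation path; under strict alternation this is fixed by the parity of the number of reads performed before halting, and that parity in general varies with the input (a log-space machine's running time is input-dependent). For an input whose path enters the accepting configuration from a Bob-owned predecessor, the water traverses $P_{\mathrm{acc}}$ from Bob's side to Alice's and spills at Alice, reporting $0$ even though $f(x,y)=1$. Note that you cannot repair this by funnelling all accepting transitions through a single Alice-owned pre-accept configuration: that configuration would acquire in-degree greater than one and destroy exactly the injectivity your matching argument relies on. A correct fix requires one more normalization, e.g.\ clocking the reversible machine to halt after exactly $T(n)$ reads for a fixed polynomial $T$ of known parity (a work-tape counter suffices and can be kept reversible), or splitting the accepting configuration into two copies according to the owner of the incoming transition and adding a constant number of extra ``bounce'' pipes to redirect the water from the wrong side to the right one. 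With such an addition the proof goes through and matches the cited argument.
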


Our attack will involve the computation of the unitary $U = \prod_{i=1}^t u_i v_i$ in the garden-hose protocol. This is a simple function, but so far we have only defined the garden-hose model for functions with a binary output.
Therefore we define an extension of the garden-hose model to functions with a larger output range, where instead of letting the water exit at Alice's or Bob's side,
we aim to let the water exit at correctly \emph{labeled pipe}. A short proof of the following proposition is given after the proof of the main theorem.
\begin{proposition}\label{prop:multi-output}
	Let $f: \{0,1\}^n \times \{0,1\}^n \to \{0,1\}^k$ be a function, such that $f$ is log-space computable and $k$ is at most $O(\log k)$. Then there exists a garden-hose protocol
	which uses a polynomial number of pipes, and such that for any input $x,y$ the water exists at Alice's side, at a pipe labeled by the output of $f(x,y)$.
\end{proposition}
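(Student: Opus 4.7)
The plan is to reduce the multi-output problem to a polynomial family of single-output garden-hose problems, and then chain them on Alice's side. Since $k = O(\log n)$, the number of possible output strings is $2^k = \mathrm{poly}(n)$. For each $z \in \{0,1\}^k$, I define the indicator $f_z : \{0,1\}^n \times \{0,1\}^n \to \{0,1\}$ by $f_z(x,y) = 1$ iff $f(x,y) = z$. Because $f$ is log-space computable and $z$ is a fixed constant that can be hard-coded into the comparing algorithm, each $f_z$ is itself log-space computable. Theorem~\ref{thm:logspace} then supplies a garden-hose protocol $P_z$ for $f_z$ of size $\mathrm{poly}(n)$.

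Next I apply Lemma~\ref{lem:ghsingleoutput} to each $P_z$, more precisely the variant indicated in the footnote following Figure~\ref{fig:gh-z}, in which both the unique $0$-output and the unique $1$-output are placed on Alice's side. This costs only a constant factor, so the resulting protocols are still of polynomial size. Fix an arbitrary ordering $z_1, z_2, \ldots, z_{2^k}$ of $\{0,1\}^k$. I chain the protocols together by adding hoses on Alice's side only: the external water tap is connected by a hose to the tap-input pipe-end of $P_{z_1}$; for each $i < 2^k$, the $0$-output pipe-end of $P_{z_i}$ is hose-connected to the tap-input pipe-end of $P_{z_{i+1}}$; and the $1$-output pipe-end of each $P_{z_i}$ is left open and assigned the label $z_i$.

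Since $f(x,y)$ has exactly one value, exactly one of the $f_{z_i}$ evaluates to $1$, call its index $i^*$. By construction the water traverses $P_{z_1}, \ldots, P_{z_{i^*-1}}$ in turn (each routing water back to Alice's $0$-output, then into the next tap-input) and finally exits the $1$-output of $P_{z_{i^*}}$, which is a pipe on Alice's side labeled by $z_{i^*} = f(x,y)$, as required. The total pipe count is $\sum_{z \in \{0,1\}^k} |P_z| = 2^k \cdot \mathrm{poly}(n) = \mathrm{poly}(n)$ using $k = O(\log n)$.

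The only real point of care — which I do not expect to be a substantive obstacle — is verifying that the chaining hoses on Alice's side can coexist with the internal hoses of each $P_z$ without conflict. This is immediate from the single-output form given by Lemma~\ref{lem:ghsingleoutput}: each $P_z$ reserves exactly one pipe-end on Alice's side for its tap-input, one for its $0$-output, and one for its $1$-output, and each is connected to at most one external hose in the chaining construction. The rest is straightforward bookkeeping.
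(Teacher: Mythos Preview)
Your argument is correct, but it takes a different route from the paper. The paper decomposes $f$ \emph{bit-wise} into the $k$ coordinate functions $f_1,\dots,f_k$, turns each into a single-output protocol, and then wires them in a depth-$k$ binary tree: the two outputs of $f_1$ each feed a fresh copy of $f_2$, whose four outputs each feed a copy of $f_3$, and so on, giving $\sum_{i=1}^k 2^{i-1}p(n)\le 2^k p(n)$ pipes. You instead decompose $f$ \emph{value-wise} into the $2^k$ indicator functions $f_z$ and wire their single-output protocols in a linear chain. Both reductions are sound and both land at $2^k\cdot\mathrm{poly}(n)=\mathrm{poly}(n)$. Your chain is arguably simpler to describe and avoids the exponential fan-out of copies; the paper's tree has the mild advantage that it only needs $k$ distinct log-space reductions (one per output bit) rather than $2^k$ hard-coded comparators, and its recursive structure mirrors the bit-by-bit labeling more directly. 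Either approach is perfectly adequate for the proposition as stated.
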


We will also need a decomposition of arbitrary unitary operations into the Clifford+T gate set.
The Solovay--Kitaev theorem is a classic result which shows that any single-qubit quantum gate
can be approximated up to precision $\eps$ using $O(\log^c(1/\eps))$ gates from a finite
gate set, where $c$ is approximately equal to 2. See for example~\cite{NC00} for an exposition of the proof. Our constructions use a very particular gate set and we are only
concerned with the number of $\T$ gates instead of the total number of gates.
A recent result by Selinger strengthens the Solovay--Kitaev theorem for this specific case~\cite{Sel15}\footnote{When the single-qubit
	unitary is a z-rotation, an even stronger version of the theorem is available~\cite{RS14}.}.

\begin{theorem}[Selinger 2015]\label{thm:selingertcount}
	Any single-qubit unitary can be approximated, up
	to any given error threshold $\epsilon > 0$, by a product
	of Clifford+$\T$ operators with $\T$-count $11+12\log(1/\epsilon)$.
\end{theorem}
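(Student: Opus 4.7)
The plan is to reduce breaking $\GIP(n,t,\etaerr,0)$ to independently applying an approximation of the inverse unitary $U^{-1} = (\prod_{i=1}^t u_i v_i)^{-1}$ to each of the $n$ qubits Alice receives, followed by a computational-basis measurement by whichever player ends up holding each qubit. If on each qubit a unitary $\tilde U$ with $\lVert \tilde U - U^{-1} \rVert \leq \epsilon$ is applied, except with failure probability $\delta$, each measured output bit agrees with the corresponding bit of $x$ except with probability $O(\epsilon + \delta)$. Choosing $\epsilon$ and $\delta$ to be a small constant fraction of $\etaerr$ and applying a Chernoff bound over the $n$ qubits, the fraction of disagreements stays below $\etaerr$ with high probability; the single round of simultaneous communication is then used to exchange measurement and garden-hose outcomes so the two players agree on the output string.

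The main task is therefore a single-qubit protocol for non-locally approximating $U^{-1}$ with entanglement $\mathrm{poly}(t/\etaerr)$. The first step is to apply Theorem~\ref{thm:selingertcount} to write such an approximation as a Clifford+T circuit $\tilde U = C_k \T \, C_{k-1} \T \cdots C_1 \T \, C_0$ with T-count (equal to T-depth, since $\tilde U$ acts on one qubit) $k = 11 + 12 \log(1/\epsilon)$. Each $C_j$ is one of finitely many single-qubit Cliffords, determined by a classical synthesis routine applied to $U$. Although neither party knows $U$ on its own, the map $((u_i)_i,(v_i)_i) \mapsto C_j$ should have polynomial garden-hose complexity (after a round of local preprocessing by each player); the intended route is to discretize $U$ to $O(\log(1/\epsilon))$ bits of precision so that the subsequent synthesis fits into the log-space regime, and then apply Theorem~\ref{thm:logspace} and Proposition~\ref{prop:multi-output}.

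The protocol then mirrors the proof of Theorem~\ref{thm:tdepth} on a single qubit, with two modifications. First, whenever the original argument has Alice apply a publicly-known Clifford, that step is replaced by a non-local application of $C_j$: the players pre-share, for each of the $24$ single-qubit Cliffords $D$, an EPR pair with $D$ applied on Bob's half, and Alice's qubit is teleported through the pair labeled by $C_j$ using the garden-hose routing protocol (at cost $\mathrm{poly}(t/\etaerr)$ pipes), producing $C_j \ket{\psi}$ on Bob's side up to Pauli corrections that are new log-space functions of the joint input and the earlier measurement outcomes; a second garden-hose step sends the qubit back to Alice. Second, after each $\T$ gate, which Alice applies directly, the unwanted $\P$ created whenever the current X-correction bit equals $1$ is removed using Lemma~\ref{lem:nonlocalphase}, at entanglement cost linear in the garden-hose complexity of that bit.

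The recurrence on the maximum garden-hose key complexity $m_j$ after $j$ T layers, analogous to Equation~\ref{eq:mt-recurrence} but with $n=1$, has the form $m_j \leq C\, m_{j-1} + \mathrm{poly}(t/\etaerr)$ for a universal constant $C$, with initial value $m_0 = \mathrm{poly}(t/\etaerr)$ reflecting the joint-input-dependence of $C_0$. Unrolling yields $m_k \leq C^{O(k)} \cdot \mathrm{poly}(t/\etaerr)$, and since $k = O(\log(1/\epsilon)) = O(\log(1/\etaerr))$ this remains polynomial in $t/\etaerr$; summing the per-step entanglement over the $k$ layers produces the claimed $p(t/\etaerr)$ bound per qubit. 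The main obstacle is the polynomial-garden-hose claim for the synthesis map: iterated product of $O(t)$ two-by-two unitaries is not obviously in log-space, so the argument relies on a careful encoding and on first reducing $U$ to finite precision; a secondary issue is that the Chernoff step requires the per-qubit errors to be essentially independent, which follows here because the fresh garden-hose randomness for different qubits can be taken independent.
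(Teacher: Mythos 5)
Your proposal does not address the statement you were asked to prove. Theorem~\ref{thm:selingertcount} is a quantitative synthesis result: every single-qubit unitary can be $\epsilon$-approximated by a Clifford+$\T$ circuit with $\T$-count $11+12\log(1/\epsilon)$. What you have written is instead a proof sketch of Theorem~\ref{thm:attackipp} (the attack on the Interleaved Product protocol), and in its very first substantive step it \emph{invokes} Theorem~\ref{thm:selingertcount} as a black box (``apply Theorem~\ref{thm:selingertcount} to write such an approximation as a Clifford+T circuit \dots with T-count $k = 11 + 12\log(1/\epsilon)$''). Using the target statement as an assumption in an argument for a different theorem is not a proof of the target statement; nothing in your text engages with why such a short Clifford+$\T$ approximation exists or where the constants $11$ and $12$ come from.

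For context: the paper itself does not prove this theorem either --- it is quoted from Selinger~\cite{Sel15} as an external strengthening of the Solovay--Kitaev theorem for the specific Clifford+$\T$ gate set. An actual proof would be number-theoretic in character, resting on exact synthesis of unitaries whose entries lie in the ring $\mathbb{Z}[\tfrac{1}{\sqrt{2}},i]$, a characterization of which such matrices are exactly representable with a given $\T$-count, and a density argument showing that a suitable approximant with small denominator exponent can always be found near any target unitary. None of these ingredients appears in your proposal, so there is no proof of the stated theorem to evaluate. (As a separate matter, the material you did write broadly tracks the paper's proof of Theorem~\ref{thm:attackipp}, though the paper routes the qubit to a pipe \emph{labeled} by the approximating unitary via Proposition~\ref{prop:multi-output} rather than teleporting through pre-rotated EPR pairs for each of the $24$ single-qubit Cliffords; but that is not the task at hand.)
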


With these auxiliary results in place, we can present our attack on the Interleaved Product protocol.
\begin{proof}[Proof of Theorem~\ref{thm:attackipp}]
	We will describe the actions taken for any single qubit $U \ket{b}$, with $b \in \{0,1\}$, such that the probability of error is at most $\eps$.
	The protocol will be attacked by performing these actions on each qubit, $n$ times in parallel.
	Our construction can be divided in the following four steps.
	For operators $A,B$, let $\norm{ A }$ denote the operator norm, and we use $\norm{A - B}$ as an associated distance measure.
	
	\begin{enumerate}
		\item Construct a (polynomial-sized) garden-hose protocol, with a number of pipes~$s$, where the qubit is routed to a pipe labeled with a unitary $\tilde{U}$ which is $\eps_1$-close to the total product $U$.
		\item Decompose the unitaries of all labels in terms of the Clifford+T gate set, using Theorem~\ref{thm:selingertcount}. In particular, we have a Clifford+T circuit $C$ with T-count $k = O(\log{\eps_2})$ such
		that $C$ is $\eps_2$-close to $\tilde{U}$, and therefore $C$ is at most $\eps$-close to $U$, where $\eps = \eps_1 + \eps_2$.
		\item After executing the garden-hose protocol as a series of teleportations, the state at pipe $\tilde{U}$ can be
		approximated by $X^{f_x} Z^{f_z} C \ket{\psi}$, with $f_x$ and $f_z$ functions of the connections Alice and Bob made in step 1 and their measurement outcomes.
		By the construction of Figure~\ref{fig:gh-x}, described in the proof of Lemma~\ref{lem:nonlocalphase}, the garden-hose complexities $\gh(f_x)$ and $\gh(f_z)$ are at most linear in $s$.
		
		We can now alternate between applying a single gate of the circuit
		$C^\dagger$ and using Lemma~\ref{lem:nonlocalphase}, $k$ times in total, to obtain a state which only has Pauli corrections left.
		\item After Alice measures this final state, she can broadcast the outcome to Bob. Alice and Bob also broadcast their inputs and measurement outcomes, which together determine whether to flip the outcome of Alice's final measurement.
	\end{enumerate}
	
	As the first step,
	we present a log-space computation solving the following problem (equivalent to the input of the protocol, with simplified notation): The input is given by $t$ two-by-two unitary matrices, $u_1, \dots, u_t$, and
	we output a matrix $\tilde{U}$ such that $\norm{\tilde{U} - u_t \dots u_2 u_1} \leq \eps_1$, where $\tilde{U}$ is encoded using $O(\log{t} + \log 1/\eps_1)$ bits. We can then use a simple extension of Theorem~\ref{thm:logspace} to transform this
	computation to a garden-hose protocol.
	
	Store the current intermediate outcome of the product in the memory of our computation,
	using $2\ell + 2$ bits for each entry of the two-by-two matrix, $\ell+1$ for the real and imaginary part each.
	Let $M_r$ denote the memory of our log-space computation after $r$ steps, obtained by computing the product $u_{r} M_{r-1}$ with rounding.
	Since the rounded matrix entry has a difference of at most $2^{-\ell}$ with the unrounded entry, we can write the precision loss at each step as $M_r = u_{r} M_{r-1} + \Delta_r$, where $\Delta_r$ is some matrix
	with all entries absolute value at most $2^{-\ell}$. Note that $\norm{\Delta_r} \leq 2^{-\ell+1}$. 
	
	The total error incurred by the repeated rounding can now be upper bounded by
	\begin{align*}
	\norm{M_t - u_t \dots u_2 u_1} & \leq \norm{u_t M_{t-1} + \Delta_t - u_t \dots u_2 u_1} \\
	&\leq \norm{\Delta_t} + \norm{u_t (M_{t-1} - u_{t-1} \dots u_2 u_1 ) } \\
	&\leq 2^{-\ell+1} + \norm{M_{t-1} - u_{t-1} \dots u_2 u_1 } \\
	&\leq t 2^{-\ell+1}
	\end{align*}
	Here we use that $\norm{A B} \leq \norm{A} \norm{B}$ together with the unitarity of all $u_i$. The final step is by iteratively applying the earlier steps $t$ times.
	If we choose $\ell = \log{t} + \log{1 / \eps_1} + 1$ and note that the final output $\tilde{U}$ is given by $M_t$, we obtain the bound.
	
	By application of Proposition~\ref{prop:multi-output} we can convert this log-space computation to a garden-hose protocol, using $s$~pipes, where $s$ is polynomial in $\eps_1$ and $t$.
	We then teleport the qubit back-and-forth using Bell measurements given by this garden-hose protocol.
	
	As second step, we approximate the unitaries that label each output pipe of the garden-hose protocol of the previous step. In particular, consider the pipe labeled $\tilde{U}$, and say we approximate
	$\tilde{U}$ using a Clifford+T circuit $C$. By Theorem~\ref{thm:selingertcount}, we can write $C$ using $k = 11+12\log(1/\eps_2)$ T gates, such that $\norm{\tilde{U} - C} \leq \eps_2$.
	Therefore, defining $\eps = \eps_1 + \eps_2$, we  have $\norm{U - C} \leq \eps$.
	
	We will perform the next steps for all unmeasured qubits (corresponding to
	open pipes in the garden-hose model) in parallel. After the
	simultaneous round of communication, Alice and Bob are then able
	to pick the correct qubit and ignore the others.
	
	Consider the state of the qubit after the teleportations
	chosen by the garden-hose protocol. For some functions $f_x,f_z$,
	with inputs Alice's and Bob's measurement outcomes,
	the qubit has state $\X^{f_x} \Z^{f_z} U \ket{b}$. From now on, we will assume this state is exactly equal to $\X^{f_x} \Z^{f_z} C \ket{b}$ -- since $U$ is $\eps$-close
	to $C$ in the operator norm, this assumption adds error probability at most $2\eps$ to the final measurement outcome\footnote{See for instance \cite[Box 4.1]{NC00} for a computation of this added error.}.
	
	Write the inverse of this circuit as alternation between gates from the Clifford group and $\T$ gates, 
	$C^\dagger=C_{k} \T C_{k-1} \T \dots C_{1} \T C_{0} $. 
	We will remove $C$ from the qubit by 
	applying these gates, one by one, by repeated application of Lemma~\ref{lem:nonlocalphase}.
	As convenient shorthand, define the state of the qubit after applying the first $r$ layers of $C^{\dagger}$, i.e.\ up to and including $C_r$, of $C^{\dagger}$ as 
	\[
	\ket{\psi_r} = \T^{\dagger} C^{\dagger}_{r+1} \T^{\dagger} C_{r+2} \dots \T^{\dagger} C_{k}^\dagger \ket{b} \,.
	\]
	In particular, we have $C_r \T \ket{\psi_{r-1}} = \ket{\psi_r}$.
	
	By exactly the same construction used in the proof of Lemma~\ref{lem:nonlocalphase}, shown in Figure~\ref{fig:gh-x}, we observe
	that the garden-hose complexities of the functions $f_x$ and $f_z$ is at most $2s+1$.
	That is, the protocol uses 2 pipes for all of the $s$ EPR pairs, and connects them in parallel if the corresponding X- or Z-correction is 0, or crosswise if the corresponding X- or Z-correction is 1.
	
	We will use divide $f^r_x$ and  $f^r_z$ as the functions describing the X and Z corrections at the end of the step $r$. Define $m_r = \max \{\gh(f^i_x),\gh(f^i_z) \}$ to
	be the maximum garden-hose complexity out the of functions describing the $\X$ and $\Z$ corrections after step $r$.
	After Alice executes the Clifford gate $C_0$, the new key functions $f^0_x$ and $f^0_z$ can be written as (the NOT of) an XOR of subsets of the previous keys, e.g., one of the keys could 
	be $f_x \oplus f_z$. By Lemma~\ref{lem:sumgh}, we then have that our starting complexities $\gh(f^0_x)$ and $\gh(f^0_z)$ are at most linear in $s$.
	
	Now, for any layer $r=1,2,\dots,k$: Our qubit starts in the state $\X^{f^{r-1}_x} \Z^{f^{r-1}_z} \ket{\psi_{r-1}}$,
	for some functions $f^{r-1}_x, f^{r-1}_z$ that each have garden-hose complexity at most $m_{r-1}$. After
	Alice performs a $\T$ gate, the qubit is in the state 
	\[
	\T \X^{f^{r-1}_x} \Z^{f^{r-1}_z} \ket{\psi_{r-1}}
	=
	\P^{f^{r-1}_x} \X^{f^{r-1}_x} \Z^{f^{r-1}_z} \T \ket{\psi_{r-1}} \,.
	\]
	Now, we apply Lemma~\ref{lem:nonlocalphase}, costing $2 \gh(f^{r-1}_x)$ EPR pairs, so that Alice has the state 
	\[
	\X^{f^{r-1}_x \oplus g_r} \Z^{f^{r-1}_z \oplus h_r} \T \ket{\psi_{r-1}} \,,
	\]
	for some functions $g_r$ and $h_r$ that depend on the measurement results by Alice and Bob. We have that $\gh(g_r) \leq 4 \gh(f^{r-1}_x) +1$ and $\gh(g_r) \leq 11 \gh(f^{r-1}_x) +2$.
	
	Now Alice applies the Clifford group gate $C_{r}$, so that the state becomes
	\[
	C_{r} \X^{f^{r-1}_x \oplus g_r} \Z^{f^{r-1}_z \oplus h_r} \T \ket{\psi_{r-1}}
	=
	\X^{f^{r}_x} \Z^{f^{r}_z} \ket{\psi_{r}}
	\,.
	\]
	The functions $f^{r}_x$ and $f^{r}_z$ can be expressed as XOR of the functions $f^{r-1}_x$, $f^{r-1}_y$, $g_r$, $h_r$. These functions have garden-hose complexity respectively at most $m_{r-1}$, $m_{r-1}$, $4 m_{r-1} + 1$ and $11 m_{r-1} + 2$.
	By application of Lemma~\ref{lem:sumgh}, the exclusive OR of these functions therefore at most has garden-hose complexity $m_r \leq 4(m_{r-1} + m_{r-1} + 4 m_{r-1} + 1 + 11 m_{r-1} + 2) + 1 = 68 m_{r-1} + 13$.
	
	Finally, after application of the gates in $\C^\dagger$, Alice has a qubit in a state which is $\eps$-close to $\X^{f^r_x} \Z^{f^r_z} \ket{b}$. Measurement in the computational basis will produce outcome $b \oplus f^r_x$ with high probability.
	Besides this final measurement, Alice and Bob both broadcast all teleportation measurement outcomes in their step of simultaneous communication. From these outcomes they can each locally compute $f^r_x$ and so derive the bit $b$ from the outcome, which equals $b \oplus f^r_x$, 
	breaking the protocol.
	
	Our total entanglement usage is $s$ for the first step, and then for each of the at most $s$ output pipes, Alice performs the rest of the protocol. For the part of the protocol that undoes the unitary $U$, we use at
	most $2 \sum_{r=0}^{k-1} m_r$ EPR pairs (for each of the at most $s$ output pipes of the first part). We have $m_0 \leq O(s)$ and $m_r \leq m_0 \cdot 2^{O(k)}$. Since $s$ is polynomial in $t$ and $\eps_1$ and $k=O(\log \eps_2)$, the total protocol uses
	entanglement polynomial in $t$ and $\eps$.
\end{proof}

Our attack replaces the exponential dependence on $t$ of the attacks presented in~\cite{CL15} by a polynomial dependence. For the case of $\etaerr = 0$, we would
need an error per qubit of around $\frac{\eps}{n}$ to achieve total error at most $\eps$. In that case, the entanglement required still grows as a polynomial, now with a super-linear dependence of both parameters $n$ and $t$.

Only the first step of our attack, i.e.\ the garden-hose protocol which computes a unitary from the inputs of the players, is specific to the interleaved product protocol. 
This attack can therefore be seen as a blueprint for attacks on a larger class of protocols: any protocol of this same form, where the unitary operation chosen depends
on a log-space computable function with classical inputs, can be attacked with entanglement which scales as a polynomial in the size of the classical inputs.

\begin{proof}[Proof of Proposition~\ref{prop:multi-output}]
	We can split up the computation $f: \{0,1\}^n \times \{0,1\}^n \to \{0,1\}^k$ into $k$ functions that each compute a bit,
	$f_1, \dots, f_k$. Since $f$ is a log-space computation, each of these functions is also a log-space computation and therefore has a polynomial-size garden-hose protocol by Theorem~\ref{thm:logspace}.
	Using Lemma~\ref{lem:ghsingleoutput}, we can with linear overhead transform each of these protocol into a unique-output protocol,
	so that the water flows out at a unique pipe when the function is 0 and another unique pipe when the function is 1.
	Let $p$ be a polynomial so that the single-output garden-hose protocol of each function $f_i$ uses pipes at most $p(n)$.

	First use the protocol for $f_1$, with output pipes labeled $0$ and $1$. Now each of these output pipes we feed into their own copy of $f_2$.
	The 0 output of the first copy we label $00$ and its 1 output $10$. Similarly, we label the 0 output of the second copy $01$ and the 1 output we label $11$.
	By recursively continuing this construction, we build a garden-hose protocol for the function $f$ which uses $s$ pipes, where $s$ is at most
	\[
	s \leq \sum_{i=1}^k 2^{i-1} p(n) \leq 2^k p(n) \,.
	\]
	Since we have taken $k=O(\log n)$, this construction uses a number of pipes polynomial in $n$.
\end{proof}

\section{Discussion}\label{sec:tdepthdisc}
We combined ideas from the garden-hose model with techniques from quantum cryptography
to find a class of quantum circuits for which instantaneous non-local computation is efficient.
These constructions can be used as attacks on protocols for quantum position-verification, and 
could also be translated back into the settings related to physics (most notable the relation between the constraints of relativity theory and quantum measurements) and distributed computing.

The resource usage of
instantaneous non-local quantum computation quantifies the non-locality 
present in a bi- or multi-partite quantum operation, and there is still room for new upper and lower bounds. Any such bounds will result in new insights, both
in terms of position-based quantum cryptography, but also in the other mentioned settings.

Some possible approaches for continuing this line of research are as follows:
\begin{itemize}
\item Computing the Pauli corrections happens without error in our current construction. Perhaps introducing randomness and a small
probability of error -- or the usage of entanglement as given in the \emph{quantum garden-hose model}
of \cite[Section 2.5]{BFSS13} -- could make this scheme more efficient.
\item Future research might be able to extend this type of construction to a wider gate set or model of computation. One could think for example of a Clifford+cyclotomic gate set~\cite{FGKM15}, match-gate computation~\cite{JKMW09},
or measurement-based quantum computation~\cite{BBD+09,BFK09}.
\item We presented an attack on the Interleaved Product protocol which required entanglement polynomial in $t$. Since the exponent of this polynomial was quite large, the scheme could still
be secure under realistic assumptions. Since the parameter $t$ concerns the \emph{classical} information that the verifiers send, requiring attackers to manipulate an amount of entanglement
which scales linearly with the classical information would already make a scheme unpractical to break in practice -- let alone a quadratic or cubic dependence. 
\item The combination of the garden-hose model with the tool set of
blind quantum computation is potentially powerful in other settings.
For example, following up on Broadbent and Jeffery who
published constructions
for quantum homomorphic encryption for circuits of low T-gate complexity~\cite{BJ15}, Dulek, Speelman, and Schaffner~\cite{DSS16}
developed a scheme for quantum homomorphic encryption,
based on this combination as presented in (a preprint of) this work.
\end{itemize}

\section*{Acknowledgments}
The author is supported by the EU projects SIQS and QALGO, and thanks Anne Broadbent, Harry Buhrman, Yfke Dulek and Christian Schaffner for useful discussions.

\appendix

\newcommand{\state}{\mathcal{S}}
\newcommand{\complex}{\mathbb{C}}
\section{Definition of INQC}\label{sec:definqc}
An \emph{instantaneous non-local quantum protocol that uses $k$ qubits of entanglement}\index{instantaneous non-local quantum computation|textbf} is a protocol of the following form. 

Alice and Bob start with a fixed, chosen $2k$-qubit state $\eta_{A_{e} B_{e}} \in \complex^{2^k} \otimes \complex^{2^k}$, the entanglement. (Our protocols all use the special case where this state is a tensor product of $k$ EPR pairs.)
\index{EPR pair}\index{entanglement}
The players receive an input state $\rho \in \state(A_{in} \otimes B_{in})$,
where $\state(A)$\index{S@$\state$} is used for the set of density matrices on some Hilbert space $A$. Let $A_{m}, A_{s}, B_{m}, A_{s}$ denote arbitrary-sized quantum registers.
Alice applies some quantum operation, i.e.\ completely positive trace-preserving map, $\mathcal{A_1} : \state(A_{in} \otimes A_{e}) \to \state(A_{m} \otimes A_{s})$ and Bob
applies the quantum operation $\mathcal{B_1} : \state(B_{in} \otimes B_{e}) \to \state(B_{m} \otimes B_{s})$.
Alice sends the register $A_{s}$ to Bob, while simultaneously Bob sends $B_{s}$ to Alice.

Afterwards Alice applies the quantum operation $\mathcal{A_2} : \state(A_{m} \otimes B_{s}) \to \state(A_{out})$ on her memory and the state she received from Bob, and outputs the result.
Likewise Bob applies the operation $\mathcal{B_2} : \state(B_{m} \otimes A_{s}) \to \state(B_{out})$ on the part of the quantum state he kept and outputs the result of this operation.

\begin{definition}

Let $\Phi : \state(A_{in} \otimes B_{in}) \to \state(A_{out} \otimes B_{out})$ be a bipartite quantum operation, i.e.\ a completely positive trace-preserving map,
for some input registers $A_{in}, B_{in}$ and output registers $A_{out}, B_{out}$.

We say that $\INQC_{\eps}(\Phi)$ is the smallest number $k$ such that there exists
an instantaneous non-local quantum protocol that uses $k$ qubits of entanglement, with induced channel $\Psi : \state(A_{in} \otimes B_{in}) \to \state(A_{out} \otimes B_{out})$, so that $\norm{\Phi - \Psi}_{\diamond} \leq \eps$.
\end{definition}

For any unitary $U$, we write $\INQC_{\eps}(U)$ as a shorthand for  $\INQC_{\eps}(\Phi_{U})$, where $\Phi_{U}$ is the induced quantum operation
defined by $\rho_{AB} \to U \rho_{AB} U^\dagger$. In this paper, we assume for simplicity that Alice's and Bob's input and output registers all consist of $n$ qubits.

These definitions are mostly compatible with those given in \cite{Beigi2011}, 
but differ in two ways -- both are unimportant for our results in this paper, but
might be relevant for follow-up results, especially when proving lower bounds.
Firstly, we made the choice for generality to
allow the players to communicate using qubits, instead of just classical messages.
As long as the number of communicated qubits
is not too large, quantum communication could potentially be replaced by classical
communication using teleportation, at the cost of extra entanglement -- the counted resource.
Secondly, we make the choice to explicitly separate the shared entangled state from the local memory in notation -- Beigi and K\"onig split the state in a measured and unmeasured part, but do not introduce notation
for (free) extra local memory in addition to the shared entangled state.

Whether these choices are reasonable or not will also depend on the exact application. Since we mostly think about applications to position-based quantum cryptography,
giving the players, i.e.\ `attackers', as much power as possible seems the most natural.

\section{The Clifford hierarchy}\index{Clifford hierarchy}\label{sec:proofcliffordhierachy}
The Clifford hierarchy, also called the Gottesman--Chuang hierarchy,
generalizes the definition of the Clifford group of Equation~\ref{eq:defclifford} in the following way~\cite{GC99}.
Define $\C_1 = \mathcal{P}$, the first level of the hierarchy, as the Pauli group.
Recursively define the $k$-th level as
\[\C_{k} = \{U \in U(2^n) \mid \forall \sigma \in \mathcal{P} : U \sigma U^\dagger \in \C_{k-1}\} \,.
\]
Then $\C_{2}$ is the Clifford group and the next levels consist of increasingly more quantum operations -- although for $k \geq 3$ the
set $\C_k$ is no longer a group~\cite{ZCC2008}.

The method behind the protocol of Theorem~\ref{thm:tcount} immediately translates to the related setting of the Clifford hierarchy.
Since the dependence on $n$ is exponential, 
 Proposition~\ref{prop:cliffordhierarchy} will only be a
qualitative improvement over Beigi and K\"onig's port-based teleportation construction when both $n$ and the level $k$ are small.

The results of Chakraborty and Leverrier~\cite{CL15} contain a complete proof of  Proposition~\ref{prop:cliffordhierarchy}, proven
independently and made available earlier than (the preprint of) the current paper.
We still include a proof of the statement
as an illustrative application of the proof technique of Section~\ref{sec:tcount}.

\begin{proposition}\label{prop:cliffordhierarchy}
Let $U$ be an $n$-qubit operation in the $k$-th level of the Clifford hierarchy, where Alice receives $n/2$ qubits and Bob receives $n/2$ 
qubits, then
$\INQC(U) \leq O(n 4^{n k})$.
\end{proposition}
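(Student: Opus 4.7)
The plan is to mirror the protocol of Theorem~\ref{thm:tcount}, using the recursive definition of the Clifford hierarchy. The key structural observation is that for $U \in \C_k$ and any Pauli $\sigma \in \mathcal{P}$, one has $U \sigma = V_\sigma U$ with $V_\sigma := U \sigma U^\dagger \in \C_{k-1}$. This generalizes the identity $\T \X = \P \X \T$ that drives the T-count proof: after applying a level-$k$ gate to a Pauli-corrected state, the state carries an unwanted level-$(k-1)$ operation on $n$ qubits, which depends on the Pauli correction and so takes at most $4^n$ values.

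I would prove the bound by induction on $k$, writing $g(n,k)$ for the entanglement cost of an INQC protocol for an $n$-qubit $\C_k$ gate, split evenly between Alice and Bob. The base case $k=2$ is the folklore Clifford INQC: Bob teleports his $n/2$ qubits to Alice, Alice applies the Clifford $U$, and Alice teleports $n/2$ qubits back to Bob; because $U$ is Clifford all corrections remain Pauli and are resolved in the single round of simultaneous communication, so $g(n,2) = O(n)$.

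For the inductive step with $U \in \C_k$, $k \geq 3$, the protocol is: (i) Bob teleports his $n/2$ qubits to Alice, leaving Alice with $\X^{b_x}\Z^{b_z}\ket{\psi}$ and Bob with the Pauli string $\sigma = \X^{b_x}\Z^{b_z}$; (ii) Alice applies $U$, producing the global state $V_\sigma U \ket{\psi}$ with $V_\sigma \in \C_{k-1}$; (iii) Alice teleports $n/2$ qubits back to Bob so the state is split again; (iv) to cancel the unwanted $V_\sigma$, the parties invoke the recursive INQC protocol for the level-$(k-1)$ gate $V^\dagger_\sigma$. Since Alice does not know which $\sigma$ occurred, we set up $4^n$ parallel instances of the recursive protocol, one per possible Pauli $\sigma$, each consuming $g(n,k-1)$ EPR pairs. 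After the single round of simultaneous communication, the exchanged measurement outcomes reveal $\sigma$, so both players can identify the unique branch containing $U\ket{\psi}$ modulo Pauli, discard the others, and finish with the usual Pauli correction. This yields the recurrence $g(n,k) \leq 4^n \cdot g(n,k-1) + O(n)$, which unrolls to $g(n,k) = O(n \cdot 4^{n(k-2)}) = O(n \cdot 4^{nk})$.

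The main obstacle is checking that the $4^n$ parallel branches compose correctly in the INQC model: all internal Bell measurements of the recursive subprotocols must be performed instantaneously on pre-shared entanglement, with every classical outcome deferred to the single simultaneous exchange. Concretely, one must argue that Alice can execute her part of each branch's protocol obliviously (since she does not know which branch is live), while Bob's operations---which may depend on $\sigma$---do not require any intra-protocol signalling. Once this bookkeeping is verified, as in the T-count protocol, the nested branching produces $4^{n(k-2)}$ parallel copies of the base-case protocol, each using $O(n)$ EPR pairs, giving the claimed total of $O(n \cdot 4^{nk})$.
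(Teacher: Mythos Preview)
Your strategy—use the Clifford-hierarchy relation $U\sigma = V_\sigma U$ with $V_\sigma\in\C_{k-1}$ and branch over the possible Paulis—is the same idea the paper uses, but the recursion as you have structured it has a real gap. After step~(iii), Bob's half of the state carries a Pauli correction $\tau=\X^{a_x}\Z^{a_z}$ that only Alice knows. If the level-$(k-1)$ call is a black-box INQC for $V^\dagger_\sigma$, its input is $(\I\otimes\tau)V_\sigma U\ket{\psi}$ and its output is $V^\dagger_\sigma(\I\otimes\tau)V_\sigma\,U\ket{\psi}$ up to Pauli. Since $V_\sigma\in\C_{k-1}$, the residual factor $V^\dagger_\sigma\tau V_\sigma$ lies in $\C_{k-2}$, not in the Pauli group; for $k\ge4$ this is a genuinely nonlocal $n$-qubit operation that cannot be removed by local corrections after the single communication round. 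So the live branch does \emph{not} hold $U\ket{\psi}$ ``modulo Pauli'' as you claim. The fix is easy but must be made explicit: once Bob has teleported his half back (the first sub-step of the recursive call), Alice must undo her own $\tau$ before applying $V^\dagger_\sigma$, and likewise at every deeper level—so the recursion cannot be treated as a black box.

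The paper sidesteps this by organizing the protocol differently. Instead of splitting the state and recursing with fixed roles, Alice teleports \emph{all} $n$ qubits to Bob after applying $U$; Bob, who already knows $\sigma$, applies $V^\dagger_\sigma$ himself, producing a $\C_{k-2}$ correction $W$ that depends on both players' outcomes. Bob then teleports the full state back into one of many sets of EPR pairs \emph{labeled} by his private data; Alice, acting obliviously on every labeled set, reads off the label, computes $W$, and applies $W^\dagger$. The roles then alternate, each one-way transfer dropping the hierarchy level by one. Keeping the whole state on one side at every step is exactly what prevents a stray half-teleportation Pauli from being conjugated into a higher-level error; your step~(iii) is the piece that breaks this invariant.
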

\begin{proof}[Proof sketch]
First Bob teleports his qubits to Alice,
with $n$ outcomes for $\X$ and $\Z$. Alice applies $U$ to
the uncorrected state, so that now the state equals
$U \X^{b_x} \Z^{b_z} \ket{\psi} = V_{b_x,b_z} U \ket{\psi}$,
where $V_{b_x,b_z}$ is an operator in the $(k-1)$-th level of the
Clifford hierarchy. Exactly which operator depends on Bob's measurement outcomes $b_x, b_z$.

Alice teleports the entire state to Bob, with outcomes $a_x, a_z$, 
and Bob applies the inverse $V^{\dagger}_{b_x,b_z}$, so that the state
is 
\[
V^{\dagger}_{b_x,b_z} \X^{a_x} \Z^{a_z} V_{b_x,b_z} U \ket{\psi}
= W_{a_x,a_z, b_x,b_z} U \ket{\psi} \, ,
\]
with $W_{a_x,a_z, b_x,b_z}$ in the $(k-2)$-th level of the Clifford hierarchy.
For every possible value of $b_x,b_z$, the players
share a set of $n$ EPR pairs. Bob teleports the state using
the set labeled with his measurement outcome $b_x,b_z$, obtaining teleportation corrections $\hat{b}_{x} , \hat{b}_{z}$.

For every set the players repeat this protocol recursively, in the following way.
For any set, Alice repeats the protocol as if it were the set used by Bob.
At the correct set, Alice effectively knows the values ${b_x, b_z}$ from the label, 
and $a_x, a_z$ she knows as own measurement outcomes. The state present is $\X^{\hat{b}_{x}} \Z^{\hat{b}_{z}} W_{a_x,a_z, b_x,b_z} U \ket{\psi}$.
When Alice applies $W^\dagger_{a_x,a_z, b_x,b_z}$, the state is given by $F_{a_x,a_z, b_x,b_z, \hat{b}_{x} , \hat{b}_{z}}  U \ket{\psi}$,
with $F$ in the $(k-3)$-th level of the Clifford hierarchy. Of this state, effectively only $\hat{b}_{x} , \hat{b}_{z}$ is unknown to Alice.
Alice teleports this state to Bob using the EPR pairs
labeled with $a_x,a_z$, and the recursive step is complete.

The players continue these steps until the first level of the hierarchy is reached -- formed by Pauli operators -- after which they can exchange the outcomes of their measurements to undo these and obtain $U \ket{\psi}$.

After $t$ steps, 
Every teleportation step after the first uses a set of $n$ EPR pairs, picked out of $4^{n}$ possibilities corresponding to
the Pauli correction of the $n$ qubits teleported in the previous step.

Summing over all rounds gives a total entanglement use of
$n \sum_{t=1}^k 4^{n t} = O(n 4^{n k})$.
\end{proof}

\section{Proof of Lemma~\ref{lem:sumgh}: Garden-hose protocols for XOR of functions}\label{sec:proofsumgh}
\textbf{To prove:}
Let $(f_1, f_2, \dots, f_k)$ be functions, where each function $f_i$ has garden-hose complexity $\gh(f_i)$.
Let $c \in \{0,1\}$ be an arbitrary bit that is 0 or 1. 
Then, 
\[
\gh \left( c \oplus \bigoplus _{i=1} ^{k} f_i \right) \leq 4 \sum_{i=1}^k \gh(f_i) + 1 \, .
\]

\begin{proof}[Proof sketch]
This statement was proven by Klauck and Podder~\cite[Theorem 18]{KP14} in a more general form, using the following two steps: First, any garden-hose protocol can be turned into a single-output garden-hose protocol,
repeated in this paper as Lemma~\ref{lem:ghsingleoutput}, such that the new complexity is at most three times the old complexity.
Then, these single-output garden-hose protocols can be used as nodes in a permutation branching program. Our current case is simply an instantiation of that proof for the particular case of the exclusive OR, 
together with the observation that we can combine both steps into one for this particular case.

For all functions $f_i$ we build a gadget with two input pipes and two output pipes, such that if the water flows
in at input pipe labeled $b \in \{0,1\}$, it flows out at the pipe labeled $f_i \oplus b$.
See Figure~\ref{fig:xorgadget} for an overview. We use four copies of the garden-hose protocol for $f_i$.

The open 0 output pipes of the protocol for $f_i$ in copy 0-IN$_i$ are connected to the open 0 output pipes in copy 0-OUT$_i$. The designated source pipe of the original protocol for $f_i$ in copy
0-OUT$_i$ is then guaranteed to be the output.\footnote{This same trick is used in the proof of Lemma~\ref{lem:ghsingleoutput} in~\cite[Lemma 11]{KP14} and in our proof of Lemma~\ref{lem:nonlocalphase}.}
We similarly connect the 1 outputs of 0-IN$_i$ to the 1 outputs of 1-OUT$_i$. This construction, i.e.\ before adding the 1-IN copy, is exactly the method used to create a single-output protocol.
We connect the open 0 pipes of 1-IN$_i$ to the open 0 pipes of 1-OUT$_i$ and the open 1 pipes of the open 1 pipes of 1-IN$_i$ to the open 1 pipes of 0-OUT$_i$.

The gadget then works as claimed by direct inspection. Since all four copies are wired exactly the same, the path of the water through the `OUT' copy is the reverse of the path it followed through
the `IN' copy, and therefore the water will exit correctly -- at the pipe which was the source of the original protocol.
\begin{figure}
\begin{center}
\begin{tikzpicture}
\coordinate (in0) at (0,8);
\coordinate (in1) at (4,8);

\coordinate (tl1) at (1,7);
\coordinate (br1) at (3,5);

\coordinate (tl2) at (5,7);
\coordinate (br2) at (7,5);

\coordinate (tl3) at (1,3);
\coordinate (br3) at (3,1);

\coordinate (tl4) at (5,3);
\coordinate (br4) at (7,1);

\coordinate (out0) at (0,0);
\coordinate (out1) at (4,0);

\draw[thick] (tl1) rectangle node[align=center] {protocol\\for $f_i$\\ \\0-IN$_i$} (br1);
\draw[thick] (tl2) rectangle node[align=center] {protocol\\for $f_i$\\ \\$\text{1-IN}_i$} (br2);
\draw[thick] (tl3) rectangle node[align=center] {protocol\\for $f_i$\\ \\$\text{0-OUT}_i$} (br3);
\draw[thick] (tl4) rectangle node[align=center] {protocol\\for $f_i$\\ \\$\text{1-OUT}_i$} (br4);

\draw ($ (tl1) + (0, -1.5) $) to[out=200, in=160] ($ (tl3) + (0, -0.5) $);
\draw ($ (tl1) + (0, -1.6) $) to[out=200, in=160] ($ (tl3) + (0, -0.6) $);
\draw ($ (tl1) + (0, -1.7) $) to[out=200, in=160] ($ (tl3) + (0, -0.7) $);

\draw ($ (tl2) + (0, -1.5) $) to[out=200, in=160] ($ (tl4) + (0, -0.5) $);
\draw ($ (tl2) + (0, -1.6) $) to[out=200, in=160] ($ (tl4) + (0, -0.6) $);
\draw ($ (tl2) + (0, -1.7) $) to[out=200, in=160] ($ (tl4) + (0, -0.7) $);

\draw ($ (br1) + (0, 0.5) $) to[out=-20, in=20] ($ (br4) + (0, 1.5) $);
\draw ($ (br1) + (0, 0.4) $) to[out=-20, in=20] ($ (br4) + (0, 1.4) $);

\draw ($ (br2) + (0, 0.5) $) to[out=-20, in=20] ($ (br3) + (0, 1.5) $);
\draw ($ (br2) + (0, 0.4) $) to[out=-20, in=20] ($ (br3) + (0, 1.4) $);

\draw[dashed] ($ (in0) $) node[anchor=south] {0 in} to[out=-90, in=160] ($ (tl1) + (0, -0.5) $);
\draw[dashed] ($ (in1) $) node[anchor=south] {1 in} to[out=-90, in=160] ($ (tl2) + (0, -0.5) $);
\draw[dashed] ($ (tl3) + (0, -1.5) $)  to[out=200, in=90] ($ (out0) $) node[anchor=north] {0 out}  ;
\draw[dashed] ($ (tl4) + (0, -1.5) $) to[out=200, in=90] ($ (out1) $) node[anchor=north] {1 out}  ;

\end{tikzpicture}
\end{center}
\caption{XOR gadget for any function $f_i$, total complexity $4 \gh(f_i)$. }
\label{fig:xorgadget}
\end{figure}
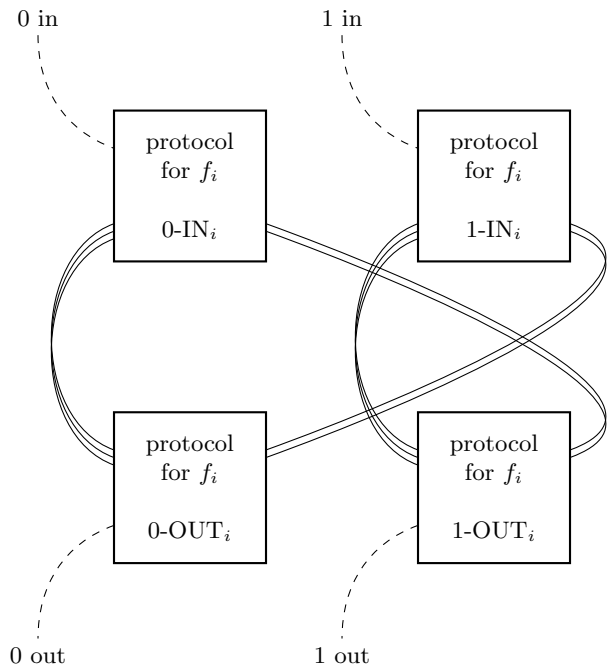
\end{proof}

\bibliographystyle{plainurl}
\bibliography{nonlocal-arxiv}

\end{document}